\newcommand{\eps}{\varepsilon}
\newcommand{\C}{\mathcal C}
\newcommand{\opt}{\mbox{\textsc{opt}}}
\newcommand{\sol}{\mbox{\textsc{sol}}}
\begin{document}
\title{A unified framework for designing EPTAS's for load balancing on parallel machines\thanks{This research was supported by a grant from the GIF, the German-Israeli Foundation for Scientific Research and Development (grant number I-1366-407.6/2016).}
}
\author{Ishai Kones\inst{1} \and Asaf Levin\inst{1}}
\institute{Faculty of Industrial Engineering and Management, The Technion, 32000 Haifa,
Israel. \email{ishai.kones@gmail.com,levinas@ie.technion.ac.il}.}
\maketitle

\begin{abstract}
We consider a general load balancing problem on parallel machines.
Our machine environment in particular generalizes the standard
models of identical machines, and the model of uniformly related
machines, as well as machines with a constant number of types, and
machines with activation costs. The objective functions that we
consider contain in particular the makespan objective and the
minimization of the $\ell_p$-norm of the vector of loads of the
machines both with possibly job rejection.

We consider this general model and design an efficient polynomial
time approximation scheme (EPTAS) that applies for all its
previously studied special cases. This EPTAS improves the current
best approximation scheme for some of these cases where only a
polynomial time approximation scheme (PTAS) was known into an
EPTAS.
\end{abstract}

\section{Introduction}
We consider a model that generalizes many previously studied optimization problems in the framework of scheduling and (minimization) load balancing problems on parallel machines.  We use this generalization in order to exhibit that there is a standard way to design efficient polynomial time approximation schemes for all these special cases and for new special cases as well.  In the earlier works, approximation schemes for many of special cases of our model were developed using ad-hoc tricks, we show that such ad-hoc methods are not necessary. 

Before going into the details of the definition of our model, we define the types of approximation schemes.   A $\rho$-approximation algorithm for a minimization problem is a polynomial time algorithm that always finds a feasible solution of cost at most $\rho$ times the cost of an optimal solution. The infimum value of $\rho$ for which an algorithm is a $\rho$-approximation is called the approximation ratio or the performance guarantee of the algorithm. A polynomial time approximation scheme (PTAS) for a given problem is a family of approximation algorithms such that the family has a $(1+\eps)$-approximation algorithm for any $\eps >0$. An efficient polynomial time approximation scheme (EPTAS) is a PTAS whose time complexity is upper bounded by the form $f(\frac{1}{\eps}) \cdot poly(n)$ where $f$ is some computable (not necessarily polynomial) function and $poly(n)$ is a polynomial of the length of the (binary) encoding of the input.
A fully polynomial time approximation scheme (FPTAS) is a stricter concept. It is defined like an EPTAS, with the added restriction that $f$ must be a polynomial in $\frac 1{\eps}$. Note that whereas a PTAS may have time complexity of the form $n^{g(\frac{1}{\eps})}$, where $g$ is for example linear or even exponential, this cannot be the case for an EPTAS. The notion of an EPTAS is modern and finds its roots in the FPT (fixed parameter tractable) literature (see e.g. \cite{CT97,DF99,FG06,Marx08}).

Since these problems are proven to be strongly NP-Hard \cite{Garey:1979:CIG:578533} (as for example our model is an extension of the minimum makespan problem on identical machines), it is unlikely (impossible assuming $\textmd{P}\neq \mathnormal{NP}$) that an optimal polynomial time algorithm or an FPTAS will be found for them. In our research, we will focus on finding an EPTAS for this general model and as a bi-product, obtain improved results to many of its special cases.  As usual, in order to present an EPTAS we can show that for a sufficiently small value of $\eps$ there exists an algorithm of time complexity of the form $f(\frac{1}{\eps}) \cdot poly(n)$ with an approximation ratio of $1+\kappa \eps$ for an arbitrary constant $\kappa$ (independent of $\eps$).

\paragraph{{\bf Our model}.}  Being a scheduling problem, the definition of the problem can be partitioned into the characteristics of the machines, the properties of the jobs, and the objective function.

\paragraph{Machines characteristics.}  We are given $m$ machines  denoted as $\{ 1,2,\ldots ,m\}$ each of which can be activated to work in one of $\tau$ types denoted as $1,2,\ldots ,\tau$.   The type of the machine will influence the processing time of a job assigned to that machine.  The input defines for every machine $i$ a (positive rational)
speed $s_i$ and an activation cost function $\alpha_i(t)$ that is a non-negative rational number denoting the cost of activating machine $i$ in type $t$.  We are also given a budget $\hat{A}$ on the total activation cost of all machines.   The meaning of this budget is that a feasible solution needs to specify for every machine $i$ its type $t_i$ such that the total activation cost is at most the budget, that is, the following constraint holds $$ \sum_{i=1}^m \alpha_i(t_i) \leq \hat{A} \ . $$
In our work we assume that $\tau$ is a constant while $m$ is a part of the input. Furthermore, without loss of generality we assume that $1=s_1\geq s_2\geq \cdots s_m >0$.

\paragraph{Jobs characteristics.}  There are $n$ jobs denoted as $J=\{ 1,2,\ldots ,n\}$.  Job  $j$ is associated with a size ($\tau$-dimensional) vector $p_j$ that specify the size $p_j(t)$ of job $j$ if it is assigned to a machine of type $t$.  That is, if job $j$ is assigned to machine $i$, and we activate machine $i$ in type $t$, then the processing time of job $j$ (on this machine) is $\frac{p_j(t)}{s_i}$.  Furthermore, for every job $j$ we are given a rejection penalty $\pi_j$ that is a positive rational number denoting the cost of not assigning job $j$ to any machine.  A definition of a feasible solution specifies for every job $j$ if $j$ is rejected (and thus incurs a rejection penalty of $\pi_j$) or not and if it is not rejected (i.e., $j$ is accepted), then the machine $i$ that $j$ is assigned to.  Formally, we need to specify a {\it job assignment} function $\sigma:J \rightarrow \{ 0,1,2,\ldots ,m\}$, where $\sigma(j)=0$ means that $j$ is rejected, and $\sigma(j)=i$ for $i\geq 1$ means that $j$ is assigned to machine $i$.

\paragraph{Definition of the objective function.}  As stated above a feasible solution defines a type $t_i$ for every machine $i$, and a job assignment function $\sigma$.  The load of machine $i$ in this solution is
$$\Lambda_i=\frac{\sum_{j\in J : \sigma(j)=i} p_{j}(t_i)}{s_i} \ . $$   Our objective function is specified using a function $F$ defined over the vector of the loads of the machines $F(\Lambda_1,\Lambda_2, \ldots ,\Lambda_m)$ that is the assignment cost of the jobs to the machines. $F$ is defined by two scalar parameters $\phi > 1$ and $1 \geq \psi \geq 0$ as follows:
$$F(\Lambda_1,\Lambda_2, \ldots ,\Lambda_m) = \psi \cdot \max_{i=1}^m \Lambda_i  + (1-\psi) \cdot \sum_{i=1}^m (\Lambda_i)^{\phi}  \ .$$
The value of $\psi$ has the following meaning.  For $\psi=1$, the
value of $F$ is the makespan of the schedule, i.e., the maximum
load of any machine, while for $\psi=0$, the value of $F$ is the
sum of the $\phi$ powers of the loads of the machines, an
objective that is equivalent to the $\ell_{\phi}$ norm of the
vector of loads.  For $\psi$ that is strictly between $0$ and $1$,
the value of $F$ is a convex combination of these classical
objectives in the load balancing literature.   The common values
of $\phi$ that were motivated by various applications that were
considered in the literature are $\phi=2$ and $\phi=3$.

Our objective is to find a type $t_i$ for every machine $i$ such that  $ \sum_{i=1}^m \alpha_i(t_i) \leq \hat{A}$, and a job assignment $\sigma$ so that the following objective function (denoted as $obj$) will be minimized:
$$obj = F(\Lambda_1,\Lambda_2, \ldots ,\Lambda_m)  + \sum_{j\in J: \sigma(j)=0} \pi_j \ . $$

Our result is an EPTAS for this load balancing problem.  For ease
of notation we denote this problem by $P$ and let $\eps>0$ be such
that $1/\eps \geq 100$ is an integer.  We will use the fact that
$\phi$ is a constant and the following simple property throughout
the analysis.
\begin{lemma}\label{claim_prop_F}
Given a value of $\rho>1$ and two vectors $(\Lambda_1,\ldots
,\Lambda_m)$ and $(\Lambda'_1,\ldots ,\Lambda'_m)$ such that for
every $i$ we have $\Lambda_i \leq \Lambda'_i \leq (1+\eps)^{\rho}
\Lambda_i$, then $$F(\Lambda_1,\Lambda_2, \ldots ,\Lambda_m) \leq
F(\Lambda'_1,\Lambda'_2, \ldots ,\Lambda'_m)\leq (1+\eps)^{\rho
\cdot \phi} F(\Lambda_1,\Lambda_2, \ldots ,\Lambda_m) \ .$$
\end{lemma}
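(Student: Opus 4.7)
The plan is to verify both inequalities termwise, using the explicit structure $F=\psi\cdot\max_i\Lambda_i+(1-\psi)\sum_i\Lambda_i^{\phi}$ and the fact that $0\leq\Lambda_i\leq\Lambda'_i$.

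First I would dispose of the left inequality by pure monotonicity. Since $\Lambda_i\leq\Lambda'_i$ for every $i$ and all loads are nonnegative, we have $\max_i\Lambda_i\leq\max_i\Lambda'_i$ and $\Lambda_i^{\phi}\leq(\Lambda'_i)^{\phi}$ coordinate-wise (using $\phi>1>0$); taking the convex combination with weights $\psi$ and $1-\psi$ yields $F(\Lambda)\leq F(\Lambda')$.

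For the right inequality I would bound the two summands of $F(\Lambda')$ separately. For the maximum term, $\max_i\Lambda'_i\leq\max_i(1+\eps)^{\rho}\Lambda_i=(1+\eps)^{\rho}\max_i\Lambda_i\leq(1+\eps)^{\rho\phi}\max_i\Lambda_i$, where the last step uses $\phi>1$ together with $(1+\eps)>1$. For the power term, for each $i$ we have $(\Lambda'_i)^{\phi}\leq\bigl((1+\eps)^{\rho}\Lambda_i\bigr)^{\phi}=(1+\eps)^{\rho\phi}\Lambda_i^{\phi}$, so summing over $i$ gives $\sum_i(\Lambda'_i)^{\phi}\leq(1+\eps)^{\rho\phi}\sum_i\Lambda_i^{\phi}$. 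Multiplying these two bounds by $\psi$ and $1-\psi$ respectively and adding yields $F(\Lambda')\leq(1+\eps)^{\rho\phi}F(\Lambda)$, as required.

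There is no real obstacle here; the only mildly delicate point is recognizing that the makespan contribution only scales by $(1+\eps)^{\rho}$ on its own, and that one has to pay the extra factor $(1+\eps)^{\rho(\phi-1)}\geq 1$ to state a single uniform bound for both summands — which is precisely why the statement invokes the exponent $\rho\phi$ instead of $\rho$.
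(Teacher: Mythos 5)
Your proof is correct and follows essentially the same route as the paper's: monotonicity of both summands gives the left inequality, and substituting $\Lambda'_i\leq(1+\eps)^{\rho}\Lambda_i$ into each term and factoring out $(1+\eps)^{\rho\phi}$ (using $(1+\eps)^{\rho}\leq(1+\eps)^{\rho\phi}$ for the makespan term) gives the right one. Your closing remark about why the exponent is $\rho\phi$ rather than $\rho$ is exactly the implicit step in the paper's chain of inequalities.
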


\begin{proof}
Using the definition of $F$ we have
\begin{eqnarray*}
F(\Lambda_1,\Lambda_2, \ldots ,\Lambda_m) &=& \psi \cdot \max_{i=1}^m \Lambda_i  + (1-\psi) \cdot \sum_{i=1}^m (\Lambda_i)^{\phi} \\
&\leq& \psi \cdot \max_{i=1}^m \Lambda'_i  + (1-\psi) \cdot \sum_{i=1}^m (\Lambda'_i)^{\phi} \\
&=& F(\Lambda'_1,\Lambda'_2, \ldots ,\Lambda'_m) \\
&\leq & \psi \cdot \max_{i=1}^m (1+\eps)^{\rho} \Lambda_i  + (1-\psi) \cdot \sum_{i=1}^m ((1+\eps)^{\rho}\Lambda_i)^{\phi}\\
&\leq & (1+\eps)^{\rho\cdot \phi} \cdot \left( \psi \cdot \max_{i=1}^m \Lambda_i  + (1-\psi) \cdot \sum_{i=1}^m (\Lambda_i)^{\phi} \right) \\
&=& (1+\eps)^{\rho\cdot \phi} \cdot F(\Lambda_1,\Lambda_2, \ldots ,\Lambda_m) .
\end{eqnarray*}
\qed
\end{proof}

\paragraph{Special cases of our model and related literature on these cases.}
The objective function we consider here generalizes the makespan
minimization objective (the special case with all $\pi_j=\infty$
and $\psi=1$), the sum of the $\phi$ powers of the machines loads
(the special case with all $\pi_j=\infty$ and $\psi=0$), as well
as these two objectives with job rejections (i.e., finite $\pi_j$
for some $j\in J$).

As for the machines model that we consider, next we state some of the earlier studied special cases of this model.
We say that {\em machines have pre-specified type} if $\hat{A}=0$ and for every $i$ we have a value $t_i$ such that $\alpha_i(t_i)=0$ and $\alpha_i(t)=1$ if $t\neq t_i$.  This special case of the machine environment is the case of unrelated machines with a constant number of types, whose special case where machines have a common speed was studied in \cite{JM17} who presented an EPTAS for the makespan objective (the extension of this scheme to machines of different speeds was explored in \cite{JM17v2}).  This EPTAS of  \cite{JM17} improves earlier PTAS's for that special cases \cite{BW12,WBB13,GJKS16}.  The $\ell_p$-norm minimization objective for the case where machines have pre-spcified type and all speeds are $1$ admits a PTAS \cite{BW12}.

The case where machines have pre-specified type generalizes its special case of {\em uniformly related machines} that is the case where $\tau=1$.  For this machines model, Jansen \cite{Ja10} presented an EPTAS for the makespan objective improving the earlier PTAS established in the seminal work of Hochbaum and Shmoys \cite{HS88}, while Epstein and Levin \cite{EL13} presented an EPTAS for the minimization of the $\ell_p$-norm of the vector of machines loads improving an earlier PTAS by Epstein and Sgall \cite{ES04}.  Later on, Epstein and Levin \cite{DBLP:journals/corr/EpsteinL14} presented an EPTAS for another scheduling objective, namely total weighted completion times, and their scheme for the case where all jobs are released at time $0$, implies a different EPTAS for the minimization of the sum of squares of the loads of the machines on uniformly related machines.  As far as we know the two schemes of \cite{EL13,DBLP:journals/corr/EpsteinL14} are the only examples for EPTAS's for load balancing objectives on uniformly related machines where one cannot use the dual approximation method of \cite{HS87,HS88}.  Our approach here is based on \cite{DBLP:journals/corr/EpsteinL14}.

The case of identical machines is the special case of uniformly related machines where all speeds are equal.  See \cite{HS87,HocBook,AAWY98} for earlier approximation schemes for this case.

The next special objective we consider here is {\em scheduling
with rejection}.  This is the special case of our objective
function where $\pi_j$ is finite (at least for some jobs).  In
\cite{BLMSS00,ES04} there is a PTAS for this variant (for $\psi\in
\{ 0,1\}$) on identical machines and on uniformly related
machines.

The last special case we consider here is the {\em machines with activation costs} model that was considered by \cite{DBLP:conf/soda/KhullerLS10}.  They considered the special case of our model with makespan objective and $\tau=2$, with $\alpha_i(1)=0$ for all $i$, and $p_j(1)=\infty$ for all $j\in J$.  In this case activating a machine as type $1$ means that the machine is not operating and cannot process any job.  For this case \cite{DBLP:conf/soda/KhullerLS10} presents a PTAS.

We summarize the previously studied special cases with reference to the previously approximation scheme of the better complexity class (i.e., we cite the first EPTAS if there is one, and the first PTAS if an EPTAS was not known prior to our work) in table \ref{table1}.

\begin{table}[h!]
  \begin{center}
{\normalsize
     \begin{tabular}{| l | c | r |}
\hline
Definition of the special case using our notation&PTAS/EPTAS& Refernce\\
\hline
\hline
$\tau=1$, $\psi=1$, and $\pi_j=\infty\ \forall j$ & EPTAS & \cite{Ja10}\\
\hline
$\tau=1$, $\psi=0$, and $\pi_j=\infty\ \forall j$ & EPTAS & \cite{EL13}\\
\hline
$\tau=1$, $\psi=1$, and $s_i=1\ \forall i$ & {\bf PTAS} &  \cite{BLMSS00}\\
\hline
$\tau=1$, $\psi=1$ or $\psi=0$ & {\bf PTAS} & \cite{ES04}\\
\hline
Machines with pre-specified type, $\psi=1$ and $\pi_j=\infty\ \forall j$ & EPTAS &   \cite{JM17v2}\\
\hline
Machines with pre-specified type, $\psi=0$, $s_i=1 \ \forall i$, and $\pi_j=\infty\ \forall j$ & {\bf PTAS} &  \cite{BW12} \\
\hline
 $\tau=2$, $\alpha_i(1)=0 \ \forall i$, $p_j(1)=\infty \ \forall j$, $\psi=1$, and  $\pi_j=\infty\ \forall j$ & {\bf PTAS } & \cite{DBLP:conf/soda/KhullerLS10}\\
\hline
 \end{tabular}
\vspace{0.2in}
   \caption{Summary of previous studies of special cases of problem $P$. For every row for which the second column is a PTAS, our EPTAS is the first efficient polynomial time approximation scheme for this special case.}
    \label{table1}
}  \end{center}
\end{table}

\paragraph{Outline of the scheme}  We apply geometric rounding of the parameters of the input (see Section \ref{sec:round}), followed by a guessing step in which we guess for each type the minimum index of the machine that is activated to this type together with its approximated load (see Section \ref{sec:guess}).  This guessing is motivated by a standard characterization of near-optimal solutions that is described earlier in Section \ref{sec:nice}.  Based on these rounding and guessing steps, we formulate a mixed integer linear program (MILP) that is solved to optimality in polynomial time using \cite{Len83,Kan83} and the property that the number of integer variables is a constant (see Section \ref{sec:MILP} for the derivation of this mathematical program), and we prove that the optimal cost to our scheduling problem $P$ is approximated by the solution obtained to the MILP.  Last, we use the solution of  the MILP to round it into a feasible solution to problem $P$ whose cost is  approximately the cost of the solution of the MILP (see Section \ref{sec:round-milp} for a description of this step and its analysis).

\section{Rounding of the input\label{sec:round}}
In what follows we would like to assume that the speed of each machine is an integer power of $1+\eps$, and that for every job $j$ and type $t$, we have that $p_j(t)$ is an integer power of $1+\eps$.  Given an instance $I$ of problem $P$ that does not satisfy these conditions, we round down the speed of each machine $i$ to an integer power of $1+\eps$, and for each job $j$ and type $t$, we round up the value of $p_j(t)$ to an integer power of $1+\eps$.  That is, we create a new rounded instance $I'$ in which the speed of machine $i$ is $s'_i$, and for each job $j$ and type $t$, we let $p'_j(t)$ be its size if it is assigned to a machine of type $t$, where we define
$$ s'_i = (1+\eps)^{\lfloor \log_{1+\eps} s_i \rfloor} \ \ \ \  \forall i \ \ \    , \ \ \ p'_j(t) = (1+\eps)^{\lceil \log_{1+\eps} p_j(t) \rceil} \  \ \ \  \ \forall j,t \ .$$
  The other parameters of the input are left in $I'$ as they were in $I$.  The analysis of this step is proved in the following lemma that follows using standard arguments.  Recall that a feasible solution to $P$ means selecting a type for each machine satisfying the total activation cost constraint and specifying a job assignment function.

\begin{lemma}\label{rounding_step_lem}
Given a feasible solution to $I$ of cost $C_I$, then the same solution is a feasible solution to $I'$ of cost (evaluated as a solution to $I'$) at most $(1+\eps)^{2\phi} \cdot C_I$.  Given a feasible solution to $I'$ of cost $C_{I'}$, then the same solution is a feasible solution to $I$ of cost (evaluated as a solution to $I$) at most $C_{I'}$.
\end{lemma}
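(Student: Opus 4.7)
The plan is to observe that the rounding step only modifies the speeds and the job sizes, while leaving the activation-cost function $\alpha_i(\cdot)$, the budget $\hat{A}$, and the rejection penalties $\pi_j$ untouched. Therefore, any choice of types $(t_1,\ldots,t_m)$ together with any job assignment $\sigma$ that is feasible for $I$ is automatically feasible for $I'$ (and vice versa), because the activation constraint depends only on the unchanged $\alpha_i$ and $\hat{A}$, and the definition of a feasible solution does not impose any constraint involving the $s_i$ or $p_j(t)$.

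Next I would fix such a common solution $(t_1,\ldots,t_m,\sigma)$ and relate the load vectors in the two instances. By the definition of the rounding, $s'_i\le s_i\le (1+\eps)s'_i$ for every $i$, and $p_j(t)\le p'_j(t)\le (1+\eps)p_j(t)$ for every $j$ and $t$. Plugging these into the load formula $\Lambda_i=\sum_{j:\sigma(j)=i}p_j(t_i)/s_i$ (and the analogous $\Lambda'_i$ for $I'$), one obtains term-by-term the bounds
$$\Lambda_i \;\le\; \Lambda'_i \;\le\; (1+\eps)^2\,\Lambda_i \qquad \forall\,i.$$
At this point Lemma \ref{claim_prop_F}, invoked with $\rho=2$, immediately yields
$$F(\Lambda_1,\ldots,\Lambda_m)\;\le\;F(\Lambda'_1,\ldots,\Lambda'_m)\;\le\;(1+\eps)^{2\phi}\,F(\Lambda_1,\ldots,\Lambda_m).$$

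To conclude, I would add the rejection-penalty term $\sum_{j:\sigma(j)=0}\pi_j$ to both sides; this term is identical in the two instances, and since $(1+\eps)^{2\phi}\ge 1$ it can be pulled out of the upper bound as a common factor over the whole objective. This gives $\textsc{obj}_{I'}\le (1+\eps)^{2\phi}\textsc{obj}_I$ for the first direction and $\textsc{obj}_I\le \textsc{obj}_{I'}$ for the second, completing the proof. There is no real obstacle here: the only thing to be careful about is applying Lemma \ref{claim_prop_F} with the correct exponent ($\rho=2$, coming from one factor of $(1+\eps)$ for the speed rounding and one for the job-size rounding), and observing that the rejection-penalty part of the objective is invariant under the rounding, so it does not degrade the approximation.
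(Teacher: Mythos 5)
Your proposal is correct and follows essentially the same route as the paper's proof: both establish that feasibility is unaffected since the rounding leaves $\alpha_i$, $\hat{A}$, and $\pi_j$ unchanged, derive $\Lambda_i \leq \Lambda'_i \leq (1+\eps)^2\Lambda_i$ from the speed and size rounding bounds, and then invoke Lemma \ref{claim_prop_F} with $\rho=2$ together with the invariance of the rejection-penalty term. No issues.
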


\begin{proof}
Consider a job assignment function $\sigma$, and a selection of type $t_i$ for every machine $i$.  The feasibility conditions in $I$ and in $I'$ are the same because  the total activation budget constraint is satisfied in $I$ if and only if it is satisfied in $I'$ as the activation cost functions as well as the value of $\hat{A}$ are the same in the two instances.  It remains to consider the cost of this assignment as a solution to $I$ and as a solution to $I'$.  The total rejection penalty of the jobs rejected by $\sigma$ is the same in the two instances.

Consider machine $i$, and let $\Lambda_i$ be its load in $I$, and $\Lambda'_i$ be its load in $I'$ (with respect to the given solution).  Assume that the solution we consider activate machine $i$ as type $t$.  Then, $$\Lambda_i = \frac{ \sum_{j\in J: \sigma(j)=i} p_j(t)}{s_i} \mbox{ and } \Lambda'_i = \frac{ \sum_{j\in J: \sigma(j)=i} p'_j(t)}{s'_i} \ . $$  We have using our definition of the rounding that $s'_i \leq s_i \leq (1+\eps)s'_i$ and for all $j$, $(1+\eps) p_j(t) \geq p'_j(t) \geq p_j(t)$.

Therefore,
$$ \Lambda_i = \frac{ \sum_{j\in J: \sigma(j)=i} p_j(t)}{s_i} \leq \frac{ \sum_{j\in J: \sigma(j)=i} p'_j(t)}{s'_i} = \Lambda'_i \ \ ,$$ while $$ \Lambda'_i = \frac{ \sum_{j\in J: \sigma(j)=i} p'_j(t)}{s'_i} \leq \frac{ \sum_{j\in J: \sigma(j)=i} (1+\eps)^2 \cdot p_j(t)}{s_i}= (1+\eps)^2 \Lambda_i \ .$$

Thus, we conclude that for every machine $i$ we have $\Lambda_i \leq \Lambda'_i \leq (1+\eps)^2\Lambda_i$.  Therefore,
by Lemma \ref{claim_prop_F}
and the definition of the objective function $obj$, the claim follows.
\qed
\end{proof}

Using this lemma and noting that applying the rounding step takes linear time, we conclude that without loss of generality with a slight abuse of notation, we assume that the input instance satisfies the properties that $s_i$ and $p_j(t)$ are integer powers of $1+\eps$ (for all $i,j,t$).

\section{Characterization of near-optimal solutions\label{sec:nice}}
We say that a feasible solution to $P$ is {\em nice} (or nice solution) if the following property holds.  Let $i<i'$, be a pair of machines that are activated to a common type $t$ such that $i$ is the minimum index of a machine that is activated to type $t$, then the load of $i$ is at least the load of $i'$ times $\eps^2$.  The following lemma together with the guessing step described in the next section serve as an alternative to the dual approximation method of \cite{HS87,HS88} and suit cases in which the dual approximation method does not work (i.e., non-bottleneck load balancing problems).

\begin{lemma}\label{nice_lem}
Given an instance of $P$ and a feasible solution $\sol$ of cost $\sol$, there exists a feasible solution $\sol'$ that is a nice solution whose cost $\sol'$ satisfies $\sol'\leq (1+\eps)^{\phi}\cdot \sol$.
\end{lemma}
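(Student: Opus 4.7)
The plan is, given a feasible solution $\sol$, to construct a nice solution $\sol'$ by modifying only the job assignments within each type; the set of activated types (hence the activation cost) and the set of rejected jobs (hence the rejection penalty) will remain unchanged, so it will suffice to bound the change in $F(\Lambda_1,\ldots,\Lambda_m)$.

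For each type $t$ where the current solution fails to be nice, I will iterate the following step.  Let $i^{*}=i^{*}(t)$ and let $i_{\max}$ be a machine of $M_t\setminus\{i^{*}\}$ with current maximum load $L_{\max}$.  If $\Lambda_{i^{*}}\ge \eps^{2}L_{\max}$ I stop for type $t$; otherwise, since indices are ordered by non-increasing speed we have $s_{i_{\max}}/s_{i^{*}}\le 1$, and I split into two cases.  In the first case, when $s_{i_{\max}}/s_{i^{*}}\ge \eps$, I swap the assignments of $i^{*}$ and $i_{\max}$: the new loads become $L_{\max}\cdot s_{i_{\max}}/s_{i^{*}}\in[\eps L_{\max},L_{\max}]$ on $i^{*}$ and $\Lambda_{i^{*}}\cdot s_{i^{*}}/s_{i_{\max}}\le \eps L_{\max}$ on $i_{\max}$, so every load in $M_t$ remains at most $L_{\max}$, the new $\Lambda_{i^{*}}\ge \eps L_{\max}\ge \eps^{2}\Lambda_{i'}$ for every $i'\in M_t$, and the $\phi$-contribution of the pair grows from at least $L_{\max}^{\phi}$ to at most $L_{\max}^{\phi}+(\eps L_{\max})^{\phi}=(1+\eps^{\phi})L_{\max}^{\phi}$; the loop for $t$ terminates nicely.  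In the second case, when $s_{i_{\max}}/s_{i^{*}}<\eps$, I move all jobs of $i_{\max}$ onto $i^{*}$, so $\Lambda_{i_{\max}}$ becomes zero and $\Lambda_{i^{*}}^{\mathrm{new}}=\Lambda_{i^{*}}+L_{\max}\cdot s_{i_{\max}}/s_{i^{*}}<\eps^{2}L_{\max}+\eps L_{\max}<2\eps L_{\max}\le L_{\max}$ (using $1/\eps\ge 100$); the pair's $\phi$-contribution strictly decreases from at least $L_{\max}^{\phi}$ to at most $(2\eps L_{\max})^{\phi}\le L_{\max}^{\phi}$, one more machine in $M_t\setminus\{i^{*}\}$ is zeroed, and the iteration continues.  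Because each second-case step zeros a machine, at most $|M_t|-1$ such steps precede either becoming nice or triggering a first-case swap, so the procedure for type $t$ is finite.

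For the global bound I aggregate over types, which act on disjoint sets of machines.  Per type $t$, the second-case steps only decrease the $\phi$-sum restricted to $M_t$, while the at most one first-case swap raises it by at most an additive $\eps^{\phi}L_{\max}^{\phi}\le\eps^{\phi}\Sigma_t$, where $\Sigma_t$ is the original $\phi$-sum of $M_t$.  Summing, the new total $\phi$-sum is at most $(1+\eps^{\phi})$ times the old, and by Bernoulli's inequality $1+\eps^{\phi}\le(1+\eps)^{\phi}$.  No step ever raises any load above the $L_{\max}$ present before that step, itself bounded by the original maximum in $M_t$, so the global maximum load does not grow.  Together with the unchanged rejection penalty this yields $\mathrm{cost}(\sol')\le(1+\eps)^{\phi}\cdot\mathrm{cost}(\sol)$.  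The main technical step will be verifying that each second-case move strictly decreases the $\phi$-sum; the reason is that $\Lambda_{i^{*}}^{\mathrm{new}}\le 2\eps L_{\max}$ is so much smaller than $L_{\max}$ that absorbing $i_{\max}$ onto the faster machine $i^{*}$ more than pays off against removing $i_{\max}$'s old contribution $L_{\max}^{\phi}$.
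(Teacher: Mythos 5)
Your proof is correct, and it reaches the required bound by a route that differs from the paper's in two substantive ways, even though both arguments consolidate load onto the minimum-index machine $\mu(t)$ of each type and both hinge on the same dichotomy of whether the current maximum-load machine of type $t$ has speed above or below $\eps$ times the speed of $\mu(t)$. First, the paper always performs one and the same operation --- merge the entire job set of the current maximum-load machine of type $t$ onto $\mu(t)$ --- and terminates the loop for type $t$ at the first iteration whose cost increases, certifying niceness at that point via a convexity argument; the speed dichotomy enters only in the analysis of that final iteration (in the fast case as a two-step thought experiment of moving $\mu(t)$'s jobs to the other machine and then everything back). You instead let the speed ratio dictate the operation: a swap of the two job sets when the maximum-load machine is fast, which immediately certifies niceness and is the only step that can raise the cost, versus a merge when it is slow, which you verify strictly decreases the $\phi$-sum. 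Second, the accounting differs: the paper charges only the last iteration, showing the larger of the two affected loads grows by a factor at most $1+\eps$ and invoking Lemma \ref{claim_prop_F} together with the symmetry of $F$, while you bound the damage additively --- at most one swap per type, adding at most $\eps^{\phi}$ times that type's original $\phi$-sum, with the maximum load never increasing --- which yields the slightly stronger factor $1+\eps^{\phi}\leq(1+\eps)^{\phi}$ and dispenses with the cost-monitoring stopping rule and the convexity argument entirely. The one presentational gap is that you never state that $i^{*}(t)$ is the minimum-index machine activated to type $t$; this is needed both to match the definition of a nice solution and to justify your inequality $s_{i_{\max}}\leq s_{i^{*}}$, so it should be made explicit.
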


\begin{proof}
We apply the following process for modifying $\sol$ into $\sol'$.  The process changes the assignment of some jobs that are not rejected in $\sol$.  Thus, the value of the total rejection penalty (i.e., $\sum_{j\in J: \sigma(j)=0} \pi_j$) is left without modification.  The process is defined for every type $t$ (one type at a time) by changing the assignment of some jobs that are assigned to machines of type $t$ and are moved to the lowest index machine of type $t$.

Consider a fixed type $t$ and let $i$ be a machine of lowest index
that is assigned type $t$ in $\sol$. We will modify the set of
jobs assigned to $i$, and we let $\lambda$ denote the current load
of $i$.  We perform the following iteration until the first time
where the cost of the solution is  increasing (or we decide to stop and move to the next type).  Thus, we stop
after applying the first iteration that causes the cost of the
solution to increase. Let $i''$ be a machine of maximum load among
the machines that are assigned type $t$ in the current solution.
If $\lambda$ is at least the load of $i''$ times $\eps^2$ (and in
particular if $i''=i$), we do nothing and continue to the next
type (the condition of nice solutions is satisfied for type $t$ by
definition of maximum).  Otherwise, we move all jobs assigned to
machine $i''$ to be assigned to machine $i$.  We recalculate
$\lambda$ and the cost of the resulting solution and check if we
need to apply the iteration again (if we stop and move to the next type, then by the convexity of $F$, the new load of $i$ is larger than the load of $i''$ prior to this iteration, and by the definition of $i''$ the condition of nice solutions is satisfied for type $t$).

Consider a specific type $t$ with $i$ as defined above, and let
$i'$ be the value of $i''$ in the last iteration.  We let
$\lambda'$ be the load of $i$ just before the last iteration of
the last procedure.  Using Lemma \ref{claim_prop_F} and the
symmetry of $F$, it suffices to show that in the last iteration
the maximum of the loads of $i$ and $i'$ is increased by a
multiplicative factor of at most $1+\eps$.

There are two cases.  In the first case assume
that $s_{i'} > \eps s_i$.  Consider first moving all jobs assigned
to $i$ (prior to the last iteration) to run on $i'$.  The
resulting load of $i'$ is increased by at most
$\frac{\lambda'}{\eps}$, and this is at most $(1+\eps)$ times the
load of $i'$ prior to the last iteration.  Then by moving all the
jobs that are assigned to $i'$ to run on $i$ the load incurred by
these jobs can only decrease and the claim follows. Otherwise,
 we conclude that $s_{i'}\leq \eps s_i$. Thus, by moving
the jobs previously assigned to $i'$ to be assigned to $i$, the
total processing time of these jobs is at most $\eps$ times the
load of $i'$ (in $\sol$).  Since $i'$ is selected as the machine
of maximum load (of type $t$ in the solution prior to the last
iteration), the new load of $i$ is at most $1+\eps$ times the load
of $i'$ in the solution obtained prior to the last iteration.
\qed
\end{proof}

\section{Guessing step\label{sec:guess}}
We apply a guessing step of (partial) information on an optimal
solution (among nice solutions).  See e.g.
\cite{schuurman2001approximation} for an overview of this
technique of guessing (or partitioning the solutions space) in the
design of approximation schemes.

In what follows, we consider one nice solution of minimal cost
(among all nice solutions) to the (rounded) instance and denote
both this solution and its cost by $\opt$ together with its job
assignment function $\sigma^o$ and the type $t^o_i$ assigned (by
$\opt$) to machine $i$ (for all $i$).

The guessing is of the following information.  We guess the approximated value of the makespan in $\opt$, and denote it by $O$.  That is, if $\opt$ rejects all jobs then $O=0$, and otherwise the makespan of $\opt$ is in the interval $(O/(1+\eps),O]$.  Furthermore, for every type $t$, we guess a minimum index $\mu(t)$ of a machine of type $t$  (namely, $\mu(t)=\min_{i:t^o_i=t} i$), and its approximated load $L_t$ that is a value such that the load of machine $\mu(t)$ is in the interval $(L_t-\frac{\eps O}{\tau}, L_t]$.  Without loss of generality, we assume that $O\geq \max_t L_t$.

\begin{lemma}\label{guessing_lem}
The number of different possibilities for the guessed information on $\opt$ is $$O(nm\log_{1+\eps} n \cdot (m\tau / \eps )^{\tau}) \ . $$
\end{lemma}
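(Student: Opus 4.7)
The plan is to count the number of valid choices for each of the three components of the guess --- the makespan approximation $O$, the per-type machine representative $\mu(t)$, and the per-type load estimate $L_t$ --- independently, then multiply.

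I would handle the straightforward counts first. For each type $t$, the representative $\mu(t)$ ranges over the $m$ machine indices, so the tuple $(\mu(t))_{t=1}^\tau$ contributes a factor of $m^\tau$. Given $O$, the load estimate $L_t$ lies on a grid of spacing $\eps O/\tau$ inside $[0,O]$ (using the convention $O\geq \max_t L_t$), so it has $O(\tau/\eps)$ choices per type, yielding $O((\tau/\eps)^\tau)$ for the tuple. Together these account for the factor $(m\tau/\eps)^\tau$ in the stated bound.

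The main obstacle is controlling the number of choices for $O$, since a priori the makespan spans a large range. After the rounding in Section \ref{sec:round}, every quantity $p_j(t)/s_i$ is a power of $1+\eps$, so $O$ too can be restricted to powers of $1+\eps$. The key observation is that if $\opt$ accepts at least one job, then the machine $i^*$ whose load equals the makespan $M$ of $\opt$ must carry some job $j^*$ whose single processing time $p_{j^*}(t^o_{i^*})/s_{i^*}$ lies in $[M/n, M]$ --- the lower bound is a pigeonhole fact, since the load of $i^*$ is a sum of at most $n$ nonnegative summands each at most $M$. Consequently $O$ is a power of $1+\eps$ within a multiplicative window of width $(1+\eps)n$ around some anchor $p_j(t)/s_i$. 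Enumerating over the $nm\tau$ anchors, the $O(\log_{1+\eps} n)$ powers of $1+\eps$ inside each such window, and the special case $O=0$ (all jobs rejected), and absorbing the constant $\tau$, yields $O(nm\log_{1+\eps} n)$ valid choices for $O$. Multiplying the three counts gives the claim.
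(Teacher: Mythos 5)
Your proposal is correct and follows essentially the same route as the paper: both arguments anchor $O$ on the largest-processing-time job of the makespan machine to get a multiplicative window of width $n$ (hence $O(\log_{1+\eps} n)$ powers of $1+\eps$ per anchor and $O(nm)$ anchors after absorbing the constant $\tau$), and both count $m$ choices for each $\mu(t)$ and $O(\tau/\eps)$ grid points for each $L_t$. The only cosmetic difference is that you phrase the lower bound $M/n$ via pigeonhole while the paper states the equivalent interval $[p_j(t^o_i)/s_i,\, n\cdot p_j(t^o_i)/s_i]$ directly.
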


\begin{proof}
To bound the number of values for $O$, let $i$ be a machine where the makespan of \opt\ is achieved and let $j$ be the job of maximum size (with respect to type $t^o_i$) assigned to $i$.  Then, the makespan of \opt\ is in the interval $[p_j(t^o_i)/s_i, n\cdot p_j(t^o_i)/s_i ]$, and thus the number of different values that we need to check for $O$ is $O(nm \log_{1+\eps} n)$.  For every type $t$ we guess the index of machine $\mu(t)$ (and there are $m$ possible such indices) and there are at most $\frac{\tau}{\eps}+1$ integer multiplies of $\eps \cdot O /\tau$ that we need to check for $L_t$ (using $L_t \leq O$).
\qed
\end{proof}
\begin{remark}
If we consider the model of machines with
pre-specified type, then we do not need to guess the value of
$\mu(t)$ (for all $t$) and the number of different possibilities
for the guessed information on $\opt$ is $O(nm\log_{1+\eps} n
\cdot (\tau / \eps )^{\tau})$.
\end{remark}

\section{The mixed integer linear program \label{sec:MILP}}
Let $\gamma \geq 10$ be a constant that is chosen later ($\gamma$ is a function of $\tau$ and $\eps$).  For a type $t$ and a real number $W$, we say that job $j$ is {\em large for $(t,W)$} if $p_j(t) \geq \eps^{\gamma} \cdot W$, and otherwise it is {\em small for $(t,W)$}.

\paragraph{Preliminaries.} Our MILP follows the configuration-MILP paradigm as one of its main ingredients.  Thus, next we define our notion of configurations.  A {\em configuration $C$} is  a vector encoding partial information regarding the assignment of jobs to one machine where $C$ consists of the following components: $t(C)$ is the type assigned to a machine with configuration $C$, $s(C)$ is the speed of a machine with configuration $C$, $w(C)$ is an approximated upper bound on the total size of jobs assigned to a machine with this configuration where we assume that $w(C)$ is an integer power of $1+\eps$ and the total size of jobs assigned to this machine is at most $(1+\eps)^3 \cdot w(C)$, $r(C)$ is an approximated upper bound on the total size of small jobs (small for $(t(C),w(C)$) assigned to a machine with this configuration where we assume that $r(C)$ is an integer multiple of $\eps \cdot w(C)$ and the total size of small jobs assigned to this machine is at most $r(C)$, last, for every integer value of $\nu$ such that $(1+\eps)^{\nu} \geq \eps^{\gamma} \cdot w(C)$ we have a component $\ell(C,\nu)$ counting the number of large jobs assigned to a machine of configuration $C$ with size $(1+\eps)^{\nu}$.  Furthermore we assume that $r(C)+ \sum_{\nu} (1+\eps)^{\nu} \cdot \ell(C,\nu) \leq (1+\eps)^3 \cdot w(C)$.  Let $\C$ be the set of all configurations.

\begin{lemma}\label{number_conf_lem}
For every pair $(s,w)$, we have $$|\{ C\in \C: s(C)=s, w(C)=w\}| \leq \tau\cdot \left( \frac{2}{\eps} \right)^{(2\gamma+1)^2 \log_{1+\eps} (1/\eps) } \ . $$
\end{lemma}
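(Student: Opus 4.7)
The plan is a direct independent-component count: with $s$ and $w$ fixed, the remaining ingredients of a configuration $C$ are the type $t(C)$, the small-job mass $r(C)$, and the multiplicity vector $(\ell(C,\nu))_\nu$, so I multiply the number of choices for each.

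First, $t(C) \in \{1,\ldots,\tau\}$ contributes a factor of $\tau$. Second, $r(C)$ is a nonnegative integer multiple of $\eps w$ that is at most $(1+\eps)^3 w$, giving at most $(1+\eps)^3/\eps + 1 \le 2/\eps$ choices. Third and most delicate, I would bound the multiplicity vector as follows. The indices $\nu$ that can possibly have $\ell(C,\nu) \neq 0$ satisfy $(1+\eps)^\nu \ge \eps^\gamma w$ by definition of ``large for $(t(C),w(C))$'', and also $(1+\eps)^\nu \le (1+\eps)^3 w$ because the inequality $r(C)+\sum_\nu (1+\eps)^\nu \ell(C,\nu) \le (1+\eps)^3 w$ forces $\ell(C,\nu)=0$ as soon as a single job of size $(1+\eps)^\nu$ already exceeds $(1+\eps)^3 w$. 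This confines $\nu$ to an interval of length $\log_{1+\eps}((1+\eps)^3/\eps^{\gamma}) = 3 + \gamma\log_{1+\eps}(1/\eps)$, hence at most $\gamma\log_{1+\eps}(1/\eps) + 4$ relevant coordinates. For each such $\nu$, the integer $\ell(C,\nu)$ lies in $[0,(1+\eps)^3/\eps^\gamma] \subseteq [0,2/\eps^\gamma]$, so the number of choices per coordinate is at most $(2/\eps)^\gamma$ (using $\gamma\ge 2$).

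Combining, the count is at most
\[
\tau \cdot (2/\eps) \cdot \bigl((2/\eps)^\gamma\bigr)^{\gamma\log_{1+\eps}(1/\eps) + 4}
= \tau \cdot (2/\eps)^{\,1 + \gamma^2\log_{1+\eps}(1/\eps) + 4\gamma}.
\]
Since $\log_{1+\eps}(1/\eps) \ge 1$ for $\eps \le 1/2$, the additive terms $1 + 4\gamma$ can be absorbed into the exponent as $(1+4\gamma)\log_{1+\eps}(1/\eps)$, and the total exponent is bounded by $(\gamma^2 + 4\gamma + 1)\log_{1+\eps}(1/\eps) \le (2\gamma+1)^2 \log_{1+\eps}(1/\eps)$, as claimed.

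The only step that requires more than mechanical arithmetic is the justification of the upper cutoff on $\nu$: the formal definition of a configuration names components $\ell(C,\nu)$ only via the lower bound $(1+\eps)^\nu \ge \eps^\gamma w$, so I need to invoke the total-load constraint to rule out $\nu$'s above $\log_{1+\eps}((1+\eps)^3 w)$. Beyond that, there is substantial slack between the exponent obtained by the counting argument and the exponent $(2\gamma+1)^2\log_{1+\eps}(1/\eps)$ stated, so no refinement (e.g.\ using the joint constraint $\sum_\nu (1+\eps)^\nu \ell(C,\nu) \le (1+\eps)^3 w$ to reduce the per-coordinate bound) is needed.
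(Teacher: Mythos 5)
Your proof is correct and follows essentially the same route as the paper's: a direct product count of $\tau$ choices for the type, at most $2/\eps$ choices for $r(C)$, and at most $(2/\eps)^{\gamma}$ choices for each of the $O(\gamma\log_{1+\eps}(1/\eps))$ relevant coordinates $\ell(C,\nu)$, with the generous exponent $(2\gamma+1)^2\log_{1+\eps}(1/\eps)$ absorbing all slack. Your explicit justification of the upper cutoff on $\nu$ via the constraint $r(C)+\sum_{\nu}(1+\eps)^{\nu}\ell(C,\nu)\leq(1+\eps)^3 w(C)$ is a point the paper states more tersely, but the argument is the same.
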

The right hand side is (at least) an exponential function of $1/\eps$ that we denote by $\beta$.

\begin{proof}
The number of types is $\tau$, so $t(C)$ has $\tau$ possible values.  The value of $r(C)$ is an integer multiple of $\eps \cdot w$ that is smaller than $(1+\eps)^3 w \leq 2w$ so there are at most $\frac 2{\eps}$ such values.  The number of different values of $\nu$ for which a job  of size $(1+\eps)^{\nu}\geq \eps^{\gamma} \cdot w$ is smaller than $2w$ is at most $\log_{1+\eps} \frac {2}{\eps^\gamma} \leq 2\gamma \cdot \log_{1+\eps} \frac {1}{\eps} $ and for each such value of $\nu$ we have that the value of $\ell(C,\nu)$ is a non-negative integer smaller than $\frac{2}{\eps^{\gamma}}$.  This proves the claim using $\gamma \geq 10$.
\qed
\end{proof}

Our MILP formulation involves several blocks and different families of variables that are presented next (these blocks have limited interaction).  We present the variables and the corresponding constraints before presenting the objective function.

\paragraph{First block - machine assignment constraints.}  For every machine $i$ and every type $t$, we have a variable $z_{i,t}$ that encodes if machine $i$ is assigned type $t$, where $z_{i,t}=1$ means that machine $i$ is assigned type $t$.  Furthermore, for every type $t$ and every speed $s$, we have a variable $m(s,t)$ denoting the number of machines of (rounded) speed $s$ that are assigned type $t$.  For every type $t$, we have $z_{i,t}=0$ for all $i< \mu(t)$ while $z_{\mu(t),t}=1$ enforcing our guessing.  The (additional) machine assignment constraints are as follows:

For every machine $i$, we require $$\sum_{t=1}^{\tau} z_{i,t} = 1 , $$ encoding the requirement that for every machine $i$, exactly one type is assigned to $i$.

For every type $t$ and speed $s$, we have $$\sum_{i:s_i = s} z_{i,t} = m(s,t) \ . $$

Let $m_s$ be the number of machines in the rounded instance of speed $s$, then $$\sum_{t=1}^{\tau} m(s,t) = m_s \ . $$

Last, we have the machine activation budget constraint
$$\sum_{t=1}^{\tau} \sum_{i=1}^m \alpha_i(t)z_{i,t} \leq \hat{A} \ .$$

The variables $z_{i,t}$ are fractional variables and their number is $O(m\tau)$, for every type $t$ and for every speed $s$ such that $s_{\mu(t)} \geq s\geq s_{\mu(t)} \cdot \eps^{\gamma}$ we require that $m(s,t)$ is an integer variable while all other variables of this family of variables are fractional.  Observe that the number of variables that belong to this family and are required to be integral (for the MILP) formulation is $O(\tau \gamma \log_{1+\eps} \frac{1}{\eps})$ that is bounded by a polynomial in $\frac {\tau \gamma}{\eps}$, and the number of fractional variables of the family $m(s,t)$ is $O(n\tau)$.

\paragraph{Second block - job assignment to machine types and rejection constraints.}
For every job $j$ and every $t\in \{ 0,1,\ldots ,\tau\}$, we have
a variable $y_{j,t}$ that encodes if job $j$ is assigned to
machine that is activated to type $t$ (for $t\geq 1$) or rejected
(for $t=0$).  That is, for $t\geq 1$, if $y_{j,t}=1$, then job $j$
is assigned to  machine of type $t$, and if $y_{j,0}=1$ it means
that $j$ is rejected (and we will pay the rejection penalty
$\pi_j$).  Furthermore for every type $t$ and every {\em possible}
integer value $\zeta$ we have two variables $n(\zeta,t)$ and
$n'(\zeta,t)$ denoting the number of jobs assigned to machine of
type $t$ whose (rounded) size (if they are assigned to machine of
type $t$) is $(1+\eps)^{\zeta}$ that are assigned as large jobs
and that are assigned as small jobs, respectively.  Here, possible
values of $\zeta$ for a given $t$ are all integers for which
$(1+\eps)^{\zeta} \leq s_{\mu(t)} \cdot \min \{ L_t/(\eps^3), O\}$
such that the rounded input contains at least one job whose size
(when assigned to a machine of type $t$) is $(1+\eps)^{\zeta}$
(where recall that $L_t$ is the guessed load of machine $\mu(t)$
and $O$ is the guessed value of the makespan).  We denote by
$\zeta(t)$ the set of possible values of $\zeta$ for the given
$t$.  We implicitly use the variables $n(\zeta,t)$ and
$n'(\zeta,t)$ for $\zeta \notin \zeta(t)$ (i.e., impossible values
of $\zeta$) by setting those variables to zero.

The constraints that we introduce for this block are as follows:

For every job $j$, we should either assign it to a machine (of one of the types) or reject it, and thus we require that $$\sum_{t=0}^{\tau} y_{j,t}=1 \ . $$
Furthermore, for every type $t$ and possible value of $\zeta$ (i.e., $\zeta \in \zeta(t)$) we require, $$\sum_{j: p_j(t)=(1+\eps)^{\zeta}} y_{j,t} \leq n(\zeta,t) +n'(\zeta,t) \ . $$

For the MILP formulation, the variables $y_{j,t}$ are fractional,
while the variables $n(\zeta,t)$ and $n'(\zeta,t)$ are  integer
variables only if $\zeta \in \zeta(t)$ and $(1+\eps)^{\zeta} \geq
s_{\mu(t)} L_t \eps^{\gamma}$ (and otherwise they are fractional).
Observe that we introduce for this block $O(n\tau)$ fractional
variables (excluding variables that are set to $0$ corresponding
to impossible values of $\zeta$) and $O(\tau \gamma \log_{1+\eps}
\frac{1}{\eps})$ integer variables.

\paragraph{Third block - configuration constraints.}
For every $C\in \C$ we have a variable $x_C$ denoting the number
of machines  of speed $s(C)$ activated to type $t(C)$ whose job
assignment is according to configuration $C$.  Furthermore,  for
every configuration $C \in \C$ and every integer value of $\nu$
such that $(1+\eps)^{\nu} < \eps^{\gamma} w(C)$ we have a variable
$\chi(C,\nu)$ denoting the number of jobs whose size (when
assigned to machine of type $t(C)$) is $(1+\eps)^{\nu}$ that are
assigned to machines of configuration $C$.  Such a variable
$\chi(C,\nu)$ exists only if there exists at least one job $j$
whose size (when assigned to a machine of type $t$) is
$(1+\eps)^{\nu}$.  For $C \in \C$, we let $\nu(C)$ denote the set
of values of $\nu$ for which the variable $\chi(C,\nu)$ exist. For
every $t$, we require that machine $\mu(t)$ has a configuration
where $s_{\mu(t)}\cdot L_t$ is approximately $w(C)$.  Thus, for
every $t$, we will have the constraint $$\sum_{C\in \C:
s(C)=s_{\mu(t)}, t(C)=t, s_{\mu(t)}\cdot L_t \leq w(C) \leq
(1+\eps)^3 \cdot s_{\mu(t)}\cdot L_t } x_C \geq 1 \ .$$

For the MILP formulation,  $x_C$ is required to be integer only if
$C$ is a {\em heavy}  configuration, where  $C$ is  {\em heavy} if
$w(C) \geq \eps^{\gamma^3} L_{t(C)} \cdot s_{\mu(t(C))}$. The
variables $\chi(C,\nu)$ are fractional for all $C\in \C$ and
$\nu\in \nu(C)$.  Observe that the number of integer variables
depends linearly in $\beta$ where the coefficient is upper bounded
by a polynomial function of $\frac{\gamma}{\eps}$.

It remains to consider the constraints bounding these variables together with the $n(\zeta,t)$, $n'(\zeta,t)$ and $m(s,t)$ introduced for the earlier blocks.  Here, the constraints have one sub-block for each type $t$.  The {\em sub-block of type $t$} (for $1\leq t \leq \tau$) consists of the following constraints:

For every type $t$ and every (rounded) speed $s$ we cannot have more than $m(s,t)$ machines with configurations satisfying $t(C)=t$ and $s(C)=s$, and therefore we have the constraint $$\sum_{C\in \C: t(C)=t, s(C) =s} x_C \leq m(s,t) \ . $$

For every $\zeta \in \zeta(t)$, we have that all the $n(\zeta,t)$ jobs of size $(1+\eps)^{\zeta}$ that we guarantee to schedule on machine of type $t$ are indeed assigned to such machine as large jobs.  Thus, we have the constraints $$\sum_{C\in \C: t(C)=t} \ell(C,\zeta) \cdot x_C = n(\zeta,t) . $$

The last constraints ensures that for every $\zeta \in \zeta(t)$, the total size of all jobs of size at least $(1+\eps)^{\zeta}$ that are scheduled as small jobs fits the total area of small jobs in configurations for which $(1+\eps)^{\zeta}$ is small with respect to $(t,w(C))$.  Here, we need to allow some additional slack, and thus for configuration $C$ we will allow to use $r(C)+2\eps w(C)$ space for small jobs.  Thus, for every integer value of $\zeta$ we have the constraint $$\sum_{\zeta' \geq \zeta} n'(\zeta',t) \cdot (1+\eps)^{\zeta'} \leq \sum_{C\in \C: t(C)=t, \eps^{\gamma} \cdot w(C) > (1+\eps)^{\zeta}} (r(C)+2\eps w(C)) x_C  \ . $$
Observe that while we define the last family of constraints to have an infinite number of constraints, we have that if when we increase $\zeta$, the summation on the left hand side is the same, then the constraint for the larger value of $\zeta$ dominates the constraint for the smaller value of $\zeta$.  Thus, it suffices to have the constraints only for $\zeta \in \cup_{t=1}^{\tau} \zeta(t)$.

In addition to the last constraints we have the non-negativity constraints (of all variables).

\paragraph{The objective function.} Using these variables and (subject to these) constraints we define the minimization (linear) objective function of the MILP as $$ \psi \cdot O  + (1-\psi) \cdot \sum_{C\in \C} \left(\frac{w(C)}{s(C)}\right)^{\phi} \cdot x_C + \sum_{j=1}^n \pi_j \cdot y_{j,0}\ .$$

Our algorithm solves optimally the MILP and as described  in the next section uses the solution for the MILP to obtain a feasible solution to problem $P$ without increasing the cost too much.  Thus, the analysis of the scheme is crucially based on the following proposition.

\begin{proposition}\label{prop-milp}
The optimal objective function value of the MILP is at most $(1+\eps)^{\phi}$ times the cost of $\opt$ as a solution to $P$.
\end{proposition}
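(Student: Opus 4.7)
The approach is constructive: starting from the nice optimal solution $\opt$ (after the input rounding of Section \ref{sec:round}), I exhibit MILP variable values that form a feasible MILP solution whose objective is at most $(1+\eps)^{\phi}\cdot\opt$.

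First I set the ``easy'' variables directly from $\opt$: put $z_{i,t^o_i}=1$ and $z_{i,t}=0$ otherwise; let $m(s,t)$ be the actual number of machines of rounded speed $s$ assigned type $t$ in $\opt$; set $y_{j,t^o_{\sigma^o(j)}}=1$ for an accepted job $j$ and $y_{j,0}=1$ for a rejected one; and define $n(\zeta,t)$, $n'(\zeta,t)$ by counting how many jobs of (rounded) size $(1+\eps)^{\zeta}$ are scheduled in $\opt$ on machines of type $t$ as large, respectively small, jobs (the large/small classification being taken with respect to the configuration of the machine holding the job, defined next). Then for each machine $i$ I define a configuration $C_i$ with $t(C_i)=t^o_i$, $s(C_i)=s_i$, and $w(C_i)$ equal to the smallest integer power of $1+\eps$ that is at least $T_i$ (the actual total size on $i$) and, if $i=\mu(t)$, also at least $s_{\mu(t)}L_t$; $r(C_i)$ is the smallest multiple of $\eps w(C_i)$ at least the total size of small-on-$(t(C_i),w(C_i))$ jobs on $i$; and $\ell(C_i,\nu)$ is the number of large jobs of size $(1+\eps)^{\nu}$ on $i$. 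I set $x_C=|\{i:C_i=C\}|$ and let $\chi(C,\nu)$ count the small jobs of size $(1+\eps)^\nu$ assigned in $\opt$ to machines whose configuration is $C$.

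Next I check feasibility of all the MILP constraints in turn. The machine-assignment and budget constraints are inherited from $\opt$. The configuration-internal inequality $r(C)+\sum_\nu(1+\eps)^\nu\ell(C,\nu)\le(1+\eps)^3 w(C)$ follows since the summed actual sizes are at most $T_i\le w(C)$ and the rounding of $r(C)$ adds at most $\eps w(C)$. The large-job balance $\sum_{C:t(C)=t}\ell(C,\zeta)x_C=n(\zeta,t)$ holds by the definition of $n(\zeta,t)$ from $\opt$. For the small-job slack constraint, the total size of small jobs of rounded size at least $(1+\eps)^\zeta$ on type-$t$ machines is bounded in $\opt$ by $\sum_{C:t(C)=t,\,\eps^\gamma w(C)>(1+\eps)^\zeta} r(C)x_C$, which is trivially at most the right-hand side (the additional $2\eps w(C)x_C$ slack is not needed to certify \emph{this} direction but will be used when we translate a fractional MILP solution back to a schedule in Section~\ref{sec:round-milp}). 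The $\mu(t)$ configuration constraint is satisfied by construction, since $T_{\mu(t)}\le s_{\mu(t)}L_t$ guarantees $s_{\mu(t)}L_t\le w(C_{\mu(t)})\le(1+\eps)s_{\mu(t)}L_t$.

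Finally I bound the objective. The rejection-penalty term is identical to that of $\opt$. Since $O\le(1+\eps)M$ (where $M$ is $\opt$'s makespan) and $\phi>1$, the $\psi\cdot O$ summand is at most $\psi(1+\eps)^\phi M$. For the $\ell_\phi$ summand, each non-$\mu(t)$ machine contributes $(w(C_i)/s_i)^\phi\le((1+\eps)\Lambda_i)^\phi=(1+\eps)^\phi\Lambda_i^\phi$, while each $\mu(t)$-machine contributes at most $((1+\eps)L_t)^\phi$. Combining these with an application of Lemma~\ref{claim_prop_F} (viewing the vector $(w(C_i)/s_i)$ as an upper bound on a scaled load vector) yields the desired $(1+\eps)^\phi\cdot\opt$ bound.

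The main obstacle is exactly the $\mu(t)$-machine contribution: the guessed $L_t$ may exceed the true load $\Lambda_{\mu(t)}$ by up to $\eps O/\tau$, so the naive bound $L_t^\phi$ is not a multiplicative $(1+\eps)$-approximation of $\Lambda_{\mu(t)}^\phi$ when $\Lambda_{\mu(t)}$ is small. Handling this requires either absorbing the additive slack by aggregating across all $\tau$ types (so that the total overhead is at most $\tau(\eps O/\tau)^\phi\le \eps^\phi O^\phi$, which in turn is at most a $(1+\eps)^\phi$-factor of $\opt$ when $\psi<1$, using $O\le(1+\eps)M$ and $M^\phi\le(1-\psi)^{-1}\opt$), or rescaling $\eps$ to a smaller $\eps'$ so that every $(1+\eps')^{O(1)}$ accumulation fits in the stated $(1+\eps)^\phi$-envelope, which is permitted by the remark in the introduction that it suffices to achieve $1+\kappa\eps$ for a constant $\kappa$.
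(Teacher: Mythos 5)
Your construction is essentially the paper's: both proofs build a feasible MILP solution from the nice solution $\opt$ by reading off $z$, $m$, $y$, a configuration per machine, $x$, $\chi$, $n$ and $n'$, and both verify the constraint blocks in the same way. The one place you genuinely diverge is the configuration of machine $\mu(t)$: the paper sets $w(C(\mu(t)))$ to be the actual total size on $\mu(t)$ rounded up to a power of $1+\eps$, so that $w(C(i))/s_i\le(1+\eps)\Lambda^o_i$ holds uniformly for every machine and the objective bound is immediate, and it then asserts (tersely, ``by our guessing'') that this $w$ lands in the window $[s_{\mu(t)}L_t,(1+\eps)^3 s_{\mu(t)}L_t]$. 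You instead inflate $w(C_{\mu(t)})$ up to $s_{\mu(t)}L_t$, which makes the window constraint trivially true but decouples $w(C_{\mu(t)})/s_{\mu(t)}$ from $\Lambda^o_{\mu(t)}$ by an additive $\eps O/\tau$ --- exactly the obstacle you flag.

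The gap is in your resolution of that obstacle. The claimed total overhead $\tau(\eps O/\tau)^{\phi}$ is false: for $\phi>1$ the map $x\mapsto x^{\phi}$ is superadditive, so $(\Lambda+\eps O/\tau)^{\phi}-\Lambda^{\phi}$ is not bounded by $(\eps O/\tau)^{\phi}$; for $\phi=2$ and $\Lambda$ of order $O$ it is about $2\eps O^{2}/\tau$, exceeding your bound by a factor of order $\tau/\eps$. The correct estimate, e.g.\ via $(\Lambda+b)^{\phi}-\Lambda^{\phi}\le\phi(\Lambda+b)^{\phi-1}b$ with $\Lambda\le O$ and $b=\eps O/\tau$, gives a per-type overhead of $O(\phi\eps O^{\phi}/\tau)$, hence $O(\phi\eps O^{\phi})$ in total; after multiplying by the coefficient $(1-\psi)$ and using $O\le(1+\eps)M$ and $(1-\psi)M^{\phi}\le F(\Lambda^o_1,\ldots,\Lambda^o_m)\le\opt$, this is still only an additive $O(\eps)\cdot\opt$, so your aggregation idea survives with the corrected inequality and the $1+\kappa\eps$ convention from the introduction. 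Your second alternative, rescaling $\eps$, does not help here at all: it absorbs accumulated multiplicative $(1+\eps')^{O(1)}$ factors but has no effect on an additive error of $\eps O/\tau$ measured against a possibly tiny $\Lambda^o_{\mu(t)}$. Either commit to the (corrected) aggregation argument, or follow the paper and keep $w(C(\mu(t)))$ tied to the actual load of $\mu(t)$ --- in which case the burden shifts to justifying that the rounded-up load of $\mu(t)$ indeed reaches $s_{\mu(t)}L_t$, a point the paper itself passes over quickly.
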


\begin{proof}
Based on the nice solution $\opt$ to the rounded instance, specified by the type $t^o_i$ assigned to machine $i$ (for all $i$) and the job assignment function $\sigma^o$, we specify a feasible solution to the MILP as follows.  Later we will bound the cost of this feasible solution.

First, consider the variables introduced for the machine
assignment block and its constraints.  The values of $z_{i,t}$ are
as follows: $z_{i,t^o_i}=1$ and for $t\neq t^o_i$, we let
$z_{i,t}=0$.  Furthermore, for every speed $s$ and type $t$, we
let $m(s,t)$ be the number of machines of speed $s$ that are
assigned type $t$.  Then, for every type $t$, we will have $z_{i,t}=0$ for $i<\mu(t)$ and $z_{\mu(t),t}=1$ by our
guessing. For every machine $i$, we have
$\sum_{t=1}^{\tau} z_{i,t} = 1$ since every machine is assigned
exactly one type. For every type $t$ and speed $s$, we have
$\sum_{i:s_i = s} z_{i,t} = m(s,t)$, as the left hand side counts
the number of machines of speed $s$ and type $t$.  The constraint
$\sum_{t=1}^{\tau} m(s,t) = m_s$ is satisfied as every machine is
activated with exactly one type and thus contribute to exactly one
of the counters of the summation on the left hand side. Last, we
have that the machine activation budget constraint
$\sum_{t=1}^{\tau}\sum_{i=1}^m \alpha_i(t)z_{i,t} \leq \hat{A}$ is
satisfied by our assignment of values to the variables as the left
hand side is exactly the total activation cost of $\opt$, and
$\opt$ is a feasible solution to $P$.

Next, consider the other variables. For every job $j$, we let $y_{j,0}=1$ if $j$ is rejected by $\opt$, and otherwise we let
$y_{j,t}=1$ if and only if the following holds for some value of
$i$ $$\sigma^o(j)=i \ \ \mbox{and} \ \ t^o_i=t \ .$$ Observe that
the total rejection penalty of the jobs that are rejected by
$\opt$ is exactly $\sum_j \pi_j y_{j,0}$. Next, we assign a
configuration to every machine based on $\opt$.  Consider a
specific machine $i$, we let $C(i)$ be the configuration we define
next.  $t(C(i))=t^o_i$ is the type assigned to machine $i$ by
$\opt$ and $s(C(i))=s_i$ is its speed. The value of $w(C(i))$ is
computed by rounding up the total size of jobs assigned to $i$ in
$\opt$ to the next integer power of $1+\eps$, $r(C(i))$ is
computed by first computing the total size of small jobs assigned
to $i$ and then rounding up to the next integer multiple of $\eps
\cdot w(C(i))$, last, for every integer value of $\nu$ such that
$(1+\eps)^{\nu} \geq \eps^{\gamma} \cdot w(C(i))$, the component
$\ell(C(i),\nu)$ counts the number of jobs assigned to $i$ whose
size is $(1+\eps)^{\nu}$.  Thus, the requirement $r(C(i))+
\sum_{\nu} (1+\eps)^{\nu} \cdot \ell(C(i),\nu) \leq (1+\eps)^3
\cdot w(C(i))$ is satisfied as the left hand side exceeds the
total size of jobs assigned to $i$ in $\opt$ by at most $\eps
w(C(i))$ and the right hand side exceeds the total size of jobs
assigned to $i$ in $\opt$ by at least $3\eps w(C(i))$. For every
configuration $C\in \C$ we let $x_C$ be the number of machines
whose assigned configuration is $C$. Furthermore, for every $C\in
\C$ and every $\nu \in \nu(C)$, we calculate the number of jobs of
size $(1+\eps)^{\nu}$ (when assigned to machines of type $t(C)$)
that are assigned by $\opt$ to machines whose assigned
configuration is $C$, and we let $\chi(C,\nu)$ be this number.  By
our guessing, we conclude that $$\sum_{C\in \C: s(C)=s_{\mu(t)},
t(C)=t, s_{\mu(t)}\cdot L_t \leq w(C) \leq (1+\eps)^3 \cdot
s_{\mu(t)}\cdot L_t } x_C \geq 1$$ is satisfied for every type
$t$. For every type $t$ and every $\zeta \in \zeta(t)$, we let
$n(\zeta,t)$ be $n(\zeta,t)=\sum_{i: t(C(i))=t} \ell(C(i),\zeta)$,
and $n'(\zeta,t)$ be defined as $n'(\zeta,t) = \max \{ 0, \sum_{j:
p_j(t)=(1+\eps)^{\zeta}} y_{j,t} - n(\zeta,t)\}$.  This completes
the assignment of values to the variables that we consider.

Then, since for every job $j$, if $\sigma^o(j) \geq 1$, then
$\sigma^o(j)$ has exactly one type, and otherwise $y_{j,0}=1$, by
definition of the values of the $y$-variables, we have
$\sum_{t=0}^{\tau} y_{j,t}=1$. Furthermore, for every type $t$ and
$\zeta \in \zeta(t)$), by definition of $n'(\zeta,t)$, we have
$$\sum_{j: p_j(t)=(1+\eps)^{\zeta}} y_{j,t} \leq n(\zeta,t)
+n'(\zeta,t) \ . $$

For every type $t$ and every (rounded) speed $s$, $\opt$ does not
have more than $m(s,t)$ machines with configurations satisfying
$t(C)=t$ and $s(C)=s$, and therefore we have the constraint
$\sum_{C\in \C: t(C)=t, s(C) =s} x_C \leq m(s,t)$. For every $t$
and every $\zeta \in \zeta(t)$, we have that $$\sum_{C\in \C:
t(C)=t} \ell(C,\zeta) \cdot x_C = \sum_{i:t(C(i))=t}
\ell(C(i),\zeta) = n(\zeta,t) \ , $$ where the first equality
holds by changing the order of summation (using the definition of
$x_C$) and the second holds by the definition of the value of
$n(\zeta,t)$. Last, for every machine $i$ and every $\zeta \in
\zeta(t(C(i)))$, the total size of all jobs of size at least
$(1+\eps)^{\zeta}$ that are scheduled as small jobs on machine $i$
is smaller than $r(C(i))$.  Thus, for every integer value of
$\zeta$, we have  $$\sum_{\zeta' \geq \zeta} n'(\zeta',t) \cdot
(1+\eps)^{\zeta'} \leq \sum_{C\in \C: t(C)=t, \eps^{\gamma} \cdot
w(C) > (1+\eps)^{\zeta}} (r(C)+2\eps w(C)) x_C \ . $$ We summarize
that the solution we found is a feasible solution to the MILP
(where all variables are integer, so the integrality constraints
of the MILP hold as well).

Last, consider the objective function value of the solution we
constructed. It is $ \psi \cdot O  + (1-\psi) \cdot \sum_{C\in \C}
\left(\frac{w(C)}{s(C)}\right)^{\phi} \cdot x_C + \sum_{j=1}^n
\pi_j \cdot y_{j,0} $. By our guessing, we conclude that $O$ is at
most $1+\eps$ times the makespan of $\opt$.  By definition, for
machine $i$ whose assigned configuration is $C(i)$ and its load in
$\opt$ is $\Lambda^o_i$ we have $\frac{w(C(i))}{s_i} \leq
(1+\eps)\Lambda^o_i$. Therefore, we have that
 \begin{eqnarray*} && \psi \cdot O  + (1-\psi) \cdot \sum_{C\in \C} \left(\frac{w(C)}{s(C)}\right)^{\phi} \cdot x_C + \sum_{j=1}^n \pi_j \cdot y_{j,0}\\   &\leq& (1+\eps)^{\phi} \cdot F(\Lambda^o_1, \ldots ,\Lambda^o_m) + \sum_{j=1}^n \pi_j \cdot y_{j,0} \ , \end{eqnarray*} and using the fact that the total rejection penalty in $\opt$ equals $\sum_{j=1}^n \pi_j \cdot y_{j,0}$, the claim follows.
\qed
\end{proof}

\section{Transforming the solution to the MILP into a schedule \label{sec:round-milp}}
Consider the optimal solution $(z^*,m^*, y^*,n^*,n'^*,x^*,\chi^*)$
for the MILP, our first step is to round up each component of
$n^*$ and $n'^*$.  That is, we let $\hat{n}(\zeta,t)=\lceil
n^*(\zeta,t) \rceil$ and $\hat{n}'(\zeta,t) =\lceil n'^*(\zeta,t)
\rceil$ for every $\zeta$ and every $t$.

Furthermore, we solve the following linear program (denoted as $(LP-y)$) that has totally unimodular constraint matrix and integer right hand side, and let $\hat{y}$ be an optimal integer solution for this linear program:
\begin{eqnarray*}
\min &\sum_{j=1}^n \pi_j \cdot y_{j,0} & \\
\mbox{subject to}& \sum_{t=0}^{\tau} y_{j,t}=1 & \forall j\in J,\\ &\sum_{j\in J: p_j(t)=(1+\eps)^{\zeta}} y_{j,t} \leq \hat{n}(\zeta,t) +\hat{n}'(\zeta,t)& \forall t \in\{ 1,2,\ldots ,\tau\}  \ , \  \forall \zeta\in \zeta(t) \} \ , \\
&y_{j,t}\geq 0 & \forall j\in J \ , \ \forall t\in \{ 0,1,\ldots ,\tau\} \ .
\end{eqnarray*}
We will assign jobs to types (and reject some of the jobs) based on the values of $\hat{y}$, that is if $\hat{y}_{j,t}=1$ we will assign $j$ to a machine of type $t$ (if $t\geq 1$) or reject it (if $t=0$). Since $y^*$ is a feasible solution to $(LP-y)$ of cost that equal the total rejection penalty of the solution to the MILP, we conclude that the total rejection penalty of this (integral) assignment of jobs to types is at most the total rejection penalty of the solution to the MILP.  In what follows we will assign $\hat{n}(\zeta,t)+\hat{n}'(\zeta,t)$ jobs of size $(1+\eps)^{\zeta}$ to machines of type $t$ (for all $t$).

The next step is to round up each component of $x^*$, that is, let
$\hat{x}_C=\lceil x^*_C \rceil$, and allocate $\hat{x}_C$ machines
of speed $s(C)$ that are activated as type $t(C)$ and whose
schedule {\em follows} configuration $C$.  These $\hat{x}_C$
machines are partitioned into $x'_C=\lfloor x^*_C \rfloor$ {\em
actual machines} and $\hat{x}_C-x'_C$ {\em virtual machines}. Both
actual and virtual machines are not machines of the instance but
{\em temporary machines} that we will use for the next step.

\begin{lemma}\label{job-assignment-lemma}
It is possible to construct (in polynomial time) an allocation of
$\hat{n}(\zeta,t)$ jobs of size $(1+\eps)^{\zeta}$ for all
$t,\zeta$ to (actual or virtual)  machines that follow
configurations in $\{ C\in \C : t(C)=t, (1+\eps)^{\zeta} \geq
\eps^{\gamma} \cdot w(C)\ \}$, and of $\hat{n}'(\zeta,t)$ jobs of
size  $(1+\eps)^{\zeta}$ for all $t,\zeta$ to (actual or virtual)
machines that follow configurations in $\{ C\in \C :
t(C)=t,(1+\eps)^{\zeta} < \eps^{\gamma} \cdot w(C)\ \}$, such that
for every machine that follows configuration $C\in \C$, the
total size of jobs assigned to that machine is at most $(1+\eps)^7
w(C)$.
\end{lemma}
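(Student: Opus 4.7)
The plan is to place the jobs in two phases (large-marked jobs counted by $\hat{n}$, then small-marked jobs counted by $\hat{n}'$) and then verify the per-machine load bound. For each type $t$ we only need to use machines following configurations $C$ with $t(C)=t$ and we already know from $(LP\text{-}y)$ which jobs of each size go to each type; by the MILP's second-block constraint $\sum_{j:p_j(t)=(1+\eps)^\zeta} y^*_{j,t}\le n^*(\zeta,t)+n'^*(\zeta,t)$ we have enough demand slots of type $t$, so we can freely decide which $\hat{n}(\zeta,t)$ jobs of size $(1+\eps)^\zeta$ will be placed as large-marked and which $\hat{n}'(\zeta,t)$ as small-marked.

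For Phase~1, fix $(t,\zeta)$ with $\zeta\in\zeta(t)$; I will show $\sum_{C:t(C)=t}\ell(C,\zeta)\hat{x}_C\ge \hat{n}(\zeta,t)$ and then assign the $\hat{n}(\zeta,t)$ jobs to the resulting slots arbitrarily. When $n^*(\zeta,t)$ is integer (this always holds when the MILP integrality applies, and also whenever all contributing $x^*_C$ are integer), we have $\hat{n}=n^*$ and the inequality is immediate from $\hat{x}_C\ge x^*_C$ combined with the MILP equality $\sum_C\ell(C,\zeta)x^*_C=n^*(\zeta,t)$. When $n^*(\zeta,t)$ is fractional, the set $S$ of configurations with $\ell(C,\zeta)\ge 1$ and $x^*_C$ fractional is nonempty (since $\ell$'s are integers), and a direct counting argument gives surplus slots $\sum_{C\in S}\ell(C,\zeta)(1-\{x^*_C\})\ge \lceil n^*(\zeta,t)\rceil -n^*(\zeta,t)$: writing $A=\sum_{C\in S}\ell(C,\zeta)\{x^*_C\}$ one has $\{n^*\}=\{A\}$, so the required surplus $\lfloor A\rfloor+1-A$ is at most $\sum_{C\in S}\ell(C,\zeta)-A$ because $\lfloor A\rfloor+1\le\sum_{C\in S}\ell(C,\zeta)$ (each $\{x^*_C\}<1$).

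For Phase~2, fix $t$ and process the indices $\zeta$ with $\hat n'(\zeta,t)>0$ in decreasing order; I greedily assign each size-$(1+\eps)^\zeta$ job to a (actual or virtual) machine following some $C$ with $t(C)=t$ and $\eps^\gamma w(C)>(1+\eps)^\zeta$ that is not yet declared full, marking a machine full the first time its accumulated small-job load exceeds $r(C)+2\eps w(C)$. Feasibility of this greedy follows from the third-block MILP inequality $\sum_{\zeta'\ge\zeta}n'^*(\zeta',t)(1+\eps)^{\zeta'}\le\sum_{C:t(C)=t,\eps^\gamma w(C)>(1+\eps)^{\zeta}}(r(C)+2\eps w(C))x^*_C$, after absorbing the rounding slack on both sides: the rounding $n'^*\to\hat{n}'$ contributes an extra demand bounded by $\sum_{\zeta'\in\zeta(t),(1+\eps)^{\zeta'}<s_{\mu(t)}L_t\eps^\gamma}(1+\eps)^{\zeta'}=O\!\bigl(s_{\mu(t)}L_t\eps^{\gamma-1}\bigr)$, while each light fractional $x^*_C$ rounded up to $\hat{x}_C$ introduces a virtual machine that adds $r(C)+2\eps w(C)\le (1+\eps)^3 w(C)$ of capacity. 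Since light configurations have $w(C)\ge\eps^{\gamma^3}L_{t(C)}s_{\mu(t(C))}$ is violated (they are \emph{not} heavy, so $w(C)<\eps^{\gamma^3}L_ts_{\mu(t)}$), one chooses $\gamma$ large enough (as allowed since $\gamma$ is a free constant) to ensure the aggregate demand slack is absorbed. Each full machine's overflow is at most the last-placed job's size, which is $<\eps^\gamma w(C)\le\eps w(C)$, so its total small-job load is at most $r(C)+3\eps w(C)$.

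Finally, combining both phases for any machine following configuration $C$: the large-job contribution is $\sum_\nu(1+\eps)^\nu\ell(C,\nu)\le (1+\eps)^3 w(C)-r(C)$ from the definition of $\mathcal{C}$, and the small-job contribution is at most $r(C)+3\eps w(C)$, giving a total load of at most $(1+\eps)^3 w(C)+3\eps w(C)\le (1+\eps)^7 w(C)$ for $1/\eps\ge 100$. The main obstacle is Phase~2's bookkeeping, where I must carefully balance the upward rounding of $n'^*$ against the added capacity from virtual machines; this is where the freedom to choose $\gamma$ is essential, since it controls the geometric scale separation between ``large'' and ``small'' jobs inside each configuration.
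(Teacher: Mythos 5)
Your Phase~1 reaches the correct conclusion (the paper gets $\sum_{C:t(C)=t}\ell(C,\zeta)\hat{x}_C\ge\hat{n}(\zeta,t)$ in one line: the left-hand side is an integer and exceeds $n^*(\zeta,t)>\hat{n}(\zeta,t)-1$; your surplus computation is a longer route to the same fact), and your final load arithmetic is fine. The genuine gap is in Phase~2. You propose to absorb the extra demand created by rounding $n'^*$ up to $\hat{n}'$ using the extra capacity created by rounding light fractional $x^*_C$ up to $\hat{x}_C$. This cannot work. The greedy needs the prefix inequalities $\sum_{\zeta'\ge\zeta}\hat{n}'(\zeta',t)(1+\eps)^{\zeta'}\le\sum_{C:t(C)=t,\,\eps^{\gamma}w(C)>(1+\eps)^{\zeta}}(r(C)+2\eps w(C))\hat{x}_C$ for \emph{every} $\zeta$, not an aggregate volume bound. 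Take $\zeta$ at the top of the range where $n'^*(\zeta,t)$ may be fractional, i.e.\ $(1+\eps)^{\zeta}$ just below $s_{\mu(t)}L_t\eps^{\gamma}$: the left-hand side can increase by about $(1+\eps)^{\zeta}>0$ under the rounding, while every configuration appearing on the right-hand side has $w(C)>(1+\eps)^{\zeta}/\eps^{\gamma}\approx s_{\mu(t)}L_t$, hence is heavy, hence has $x^*_C$ integral and $\hat{x}_C=x^*_C$, so the right-hand side gains nothing. Moreover, even in aggregate the extra virtual capacity could be zero (all light $x^*_C$ might be integral), and enlarging $\gamma$ only \emph{shrinks} the $w(C)$ of light configurations, i.e.\ shrinks the very capacity you are counting on; $\gamma$ reduces the demand slack but does not manufacture capacity.

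The paper closes exactly this gap with a different idea: before running next fit on type $t$, it places one job of each size class in the fractional range directly on machine $\mu(t)$. This reduces the residual counts $\tilde{n}'(\zeta,t)$ back to at most $n'^*(\zeta,t)$, after which the third-block MILP constraint with $x^*$ (together with $\hat{x}_C\ge x^*_C$) yields the needed prefix inequalities, and the cost is only an additive $\le\eps\, w(C(\mu(t)))$ on machine $\mu(t)$ (a geometric sum of sizes below roughly $\eps^{\gamma}w(C(\mu(t)))$), absorbed into the $(1+\eps)^7$ factor. You need this (or an equivalent) extra step. A smaller point: your greedy must also process the machines in non-increasing order of $w(C)$ (next fit), otherwise the prefix inequalities do not imply that each job is small for the machine receiving it.
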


\begin{proof}
First, we allocate the large jobs.  That is, for every value of $t$ and $\zeta$ we allocate $\hat{n}(\zeta,t)$ jobs whose size (when assigned to machine of type $t$) is $(1+\eps)^{\zeta}$ to (actual or virtual) machines that follow configurations in $\{ C\in \C : t(C)=t, (1+\eps)^{\zeta} \geq \eps^{\gamma} \cdot w(C)\ \}$.  We allocate for every machine that follow configuration $C$ exactly $\ell(C,\zeta)$ such jobs (or less if there are no additional jobs of this size to allocate).  Since $\sum_{C\in \C: t(C)=t} \ell(C,\zeta) \cdot x^*_C = n^*(\zeta,t)$, we allocate in this way at least  $n^*(\zeta,t)$ such jobs as large jobs.  By the constraint $\sum_{C\in \C: t(C)=t} \ell(C,\zeta) \cdot x_C = n(\zeta,t)$, we have the following
 \begin{eqnarray*}
  \sum_{C\in \C: t(C)=t} \ell(C,\zeta) \cdot \hat{x}_C &\geq& \sum_{C\in \C: t(C)=t} \ell(C,\zeta) \cdot x^*_C\\
  &=& n^*(\zeta,t) \\
  &>& \hat{n}(\zeta,t) -1
  \end{eqnarray*}
Since both sides of the last sequence of inequalities are integer
number, we conclude that  $\sum_{C\in \C: t(C)=t} \ell(C,\zeta)
\cdot \hat{x}_C \geq \hat{n}(\zeta,t)$, and thus all these
$\hat{n}(\zeta,t)$ jobs  are assigned to machines that follow
configurations in $\{ C\in \C : t(C)=t, (1+\eps)^{\zeta} \geq
\eps^{\gamma} \cdot w(C)\ \}$ without exceeding the bound of
$\ell(C,\zeta)$ on the number of jobs of size $ (1+\eps)^{\zeta}$
assigned to each such machine.

Next, consider the allocation of small jobs.  We apply the
following process for each type separately. We first allocate one
small job of each size to machine  $\mu(t)$ where here we mean
that if machine $\mu(t)$ follows configuration $C(\mu(t))$, then
we will allocate one job of each size of the form
$(1+\eps)^{\zeta}$ for all $\zeta$ for which $(1+\eps)^{\zeta-3} <
\eps^{\gamma} w(C(\mu(t)))$.  Observe that these values of $\zeta$
are the only values for which we may have $\hat{n}'(\zeta,t)\neq
n'^*(\zeta,t)$. Let $\tilde{n}'(\zeta,t)$ denote the number of
jobs of size $(1+\eps)^{\zeta}$ that we still need to allocate.
Since $\gamma \geq 10$, this step increases the total size of jobs
assigned to machine $\mu(t)$ by $\eps^{\gamma} \cdot w(C(\mu(t)))
\cdot \sum_{h=0}^{\infty} \frac{1}{(1+\eps)^{h-3}} \leq \eps
w(C(\mu(t)))$.

Given a type $t$, we sort the machines that follow configurations with type $t$ according to the values of $w(C)$ of the configuration $C$ that they follow.  We sort the machines in a monotonically non-increasing order of $w(C)$.  Similarly, we sort the collection of $\tilde{n}'(\zeta,t)$ jobs of size  $(1+\eps)^{\zeta}$ (for all values of $\zeta$) in a non-decreasing order of $\zeta$.  We allocate the jobs to (actual or virtual) machines using the next fit heuristic.  That is, we start with the first machine (in the order we described) as the current machine.  We pack one job at a time to the current machine, whenever the total size of the jobs that are assigned to the current machine (that follows configuration $C$) exceeds $r(C)+2\eps w(C)$ (it is at most $r(C)+3\eps w(C)$ as we argue below), we move to the next machine and define it as the current machine.  We need to show that all jobs are indeed assigned in this way and that whenever we pack a job into the current machine that follows configuration $C$, the size of the job is smaller than $\eps^{\gamma} w(C)$.  If we append one (non-existing) extra machine of type $t$ that follows a configuration with $w(C) < \min_{j} p_{j}(t) / \eps^{\gamma}$, then it suffices to show that whenever we move to a new current machine  that follows configuration $C$, all the (small) jobs of size $(1+\eps)^{\zeta}$ for $\zeta\in \{ \zeta': (1+\eps)^{\zeta'} \geq \eps^{\gamma} w(C)\}$ are assigned.  This last required property holds, as whenever the last set of values of $\zeta$ is changed, we know that the total size of the (small) jobs we already assigned is (unless all these jobs are assigned) at least
\begin{eqnarray*}
&& \sum_{C\in \C: t(C)=t, \eps^{\gamma} \cdot w(C) > (1+\eps)^{\zeta}} (r(C)+2\eps w(C)) \hat{x}_C \\
 &\geq& \sum_{C\in \C: t(C)=t, \eps^{\gamma} \cdot w(C) > (1+\eps)^{\zeta}} (r(C)+2\eps w(C)) x^*_C \\
 &\geq& \sum_{\zeta' \geq \zeta} n'^*(\zeta',t) \cdot (1+\eps)^{\zeta'} \\
 &\geq& \sum_{\zeta' \geq \zeta} \tilde{n}'(\zeta',t) \cdot (1+\eps)^{\zeta'} \ .
   \end{eqnarray*}
   By allocating a total size of at most $r(C)+3\eps w(C)$ of small jobs to a machine that follows configuration $C$, the resulting total size of jobs assigned to that machine is at most $$(1+\eps)^3 w(C)+ 3\eps w(C) \leq (1+\eps)^6 w(C)$$ and the claim follows.
\qed
\end{proof}

The assignment of jobs for which $y_{j,0} \neq 0$ to machines is specified by assigning every job that was assigned to a virtual machine that follows configuration $C$ to machine $\mu(t(C))$ instead, and assigning the jobs allocated to actual machines by allocating every actual machine to an index in $\{ 1,2,\ldots ,m\}$ following the procedure described in the next step.  Before describing the assignment of actual machines to indices in $\{ 1,2,\ldots ,m\}$ of machines in the instance (of problem $P$), we analyze the increase of the load of machine $\mu(t)$ due to the assignment of jobs that were assigned to virtual machines that follow configuration with type $t$.

\begin{lemma}\label{virtual_machine_alloc_lem}
There is a value of $\gamma$ for which the resulting total size of
jobs assigned to machine $\mu(t)$ is at most $(1+\eps)^8 \cdot
w(C(\mu(t))$ where machine $\mu(t)$ follows the configuration
$C(\mu(t))$.
\end{lemma}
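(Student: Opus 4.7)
The plan is to write the total size of jobs assigned to $\mu(t)$ as the initial load produced by Lemma \ref{job-assignment-lemma} plus the additional load from jobs previously placed on virtual machines of type $t$, and to bound each contribution separately. Lemma \ref{job-assignment-lemma} already bounds the initial load on the actual machine following $C(\mu(t))$ by $(1+\eps)^7 w(C(\mu(t)))$, so it will suffice to show that the virtual contribution adds at most $((1+\eps)^8-(1+\eps)^7)w(C(\mu(t)))$, which is $\Theta(\eps)\cdot w(C(\mu(t)))$.

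The key structural observation is that virtual machines arise only from non-heavy configurations: for every heavy $C$ the MILP enforces $x_C$ to be integer, so $x^*_C = x'_C = \hat{x}_C$ and no virtual copy is created. Hence every virtual machine of type $t$ follows some non-heavy $C$ with $t(C)=t$, which by definition of heavy satisfies $w(C) < \eps^{\gamma^3} L_t s_{\mu(t)}$. Combined with the guessing constraint $w(C(\mu(t))) \geq s_{\mu(t)} L_t$, this gives $w(C) < \eps^{\gamma^3} w(C(\mu(t)))$, and Lemma \ref{job-assignment-lemma} bounds the total job size on each virtual machine following $C$ by $(1+\eps)^7 w(C)$. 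Consequently the virtual contribution is at most $N_t \cdot (1+\eps)^7 \eps^{\gamma^3} w(C(\mu(t)))$, where $N_t$ counts the non-heavy configurations $C$ with $t(C)=t$ for which $x^*_C$ is fractional (equivalently, those producing a virtual copy).

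The crux of the proof is to establish $N_t \leq \mathrm{poly}(\tau, 1/\eps)$. Once this bound is in hand, one can choose $\gamma$ as a function of $\tau$ and $\eps$ large enough so that $N_t \cdot \eps^{\gamma^3}\leq \eps$, whence $N_t\cdot(1+\eps)^7\eps^{\gamma^3} \leq \eps(1+\eps)^7$ and the total load on $\mu(t)$ is bounded by $(1+\eps)^8 w(C(\mu(t)))$ as required. To bound $N_t$, I would first select the MILP optimum so that, after every integrality-constrained variable is fixed to its value in that optimum, the remaining fractional variables form a basic feasible solution of the induced linear program; this is without loss of generality because re-solving the LP with the same linear objective cannot worsen it. In such a basic feasible solution the number of non-zero fractional $x_C$ values among non-heavy type-$t$ configurations is at most the number of linearly independent tight constraints that involve them, namely the configuration-mass bounds $\sum_{C:t(C)=t,s(C)=s} x_C \leq m(s,t)$, the large-job balances $\sum_C \ell(C,\zeta) x_C = n(\zeta,t)$, the small-job packing inequalities, and the anchor $\sum_C x_C\geq 1$ for $\mu(t)$. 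The target bound is $N_t = O(\beta \cdot \tau \cdot \gamma \cdot \log_{1+\eps}(1/\eps))$, which is polynomial in $\tau$ and $1/\eps$ and therefore suffices.

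The main obstacle is precisely this last step: a naive counting of LP constraints yields only a bound polynomial in $n+m$, since the number of distinct speeds and job-size classes appearing in the constraints need not be bounded in $1/\eps$. To overcome this I would aggregate configurations that agree on $(t(C),s(C),w(C))$ and are interchangeable in the relevant constraints, arguing that among non-heavy configurations only a bounded number of $(s,w)$-equivalence classes can contribute non-trivially to a basic feasible solution, and that within each class Lemma \ref{number_conf_lem} already limits the count by $\beta$. Once $N_t$ is bounded independently of $n$ and $m$ in this manner, choosing $\gamma$ large enough (as a function of the degree of the resulting polynomial in $\tau$ and $1/\eps$) completes the proof.
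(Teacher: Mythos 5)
Your setup is sound and matches the first half of the paper's argument: the load on $\mu(t)$ splits into the $(1+\eps)^7 w(C(\mu(t)))$ guaranteed by Lemma \ref{job-assignment-lemma} plus the volume relocated from virtual machines; virtual machines arise only from non-heavy configurations; and every such $C$ satisfies $w(C)\leq \eps^{\gamma^3} L_t s_{\mu(t)}\leq 2\eps^{\gamma^3} w(C(\mu(t)))$, with at most one virtual copy per configuration. The gap is exactly the step you flag as the crux: bounding the \emph{number} $N_t$ of virtual machines by $\mathrm{poly}(\tau,1/\eps)$. Even after passing to a basic feasible solution of the residual LP (itself a modification of the algorithm, since Lenstra/Kannan does not return one), the number of positive fractional $x_C$ is controlled only by the number of constraint rows, and the relevant rows are indexed by distinct machine speeds (one mass constraint $\sum_{C: t(C)=t, s(C)=s}x_C\leq m(s,t)$ per speed $s$) and by distinct size classes $\zeta\in\zeta(t)$; these index sets have cardinality $\Theta(m)$ and $\Theta(n)$, not $\mathrm{poly}(1/\eps)$. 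The proposed repair --- that only boundedly many $(s,w)$-classes of non-heavy configurations can ``contribute non-trivially'' to a basic solution --- is asserted, not proved, and there is no apparent reason it holds: each tight speed-indexed constraint can support its own positive non-heavy configuration variable, so the support can be spread over $\Theta(m)$ classes. Since $\gamma$ must be independent of $n$ and $m$, the argument does not close.

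The paper avoids counting virtual machines altogether and instead bounds the \emph{weighted} sum of $w(C)$ over configurations possessing a virtual copy. Grouping by the pair $(s(C),w(C))$, Lemma \ref{number_conf_lem} gives at most $\beta$ configurations per pair, each contributing at most one virtual machine of job volume $O(w(C))$, and the admissible pairs live on geometric grids of powers of $1+\eps$, so the sum of the $w(C)$ values collapses into convergent geometric series (the niceness of $\opt$ supplying the $1/\eps^2$ factor that limits the range of speeds). This yields a total relocated volume of at most $\beta\,\eps^{\gamma^3-5}\,w(C(\mu(t)))$, after which one chooses $\gamma$ (e.g.\ $\gamma=20\tau/\eps^2$) so that $\beta\,\eps^{\gamma^3-5}\leq\eps$, which is precisely the slack between $(1+\eps)^7$ and $(1+\eps)^8$. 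To salvage your outline, replace the polynomial bound on $N_t$ by this bound on the weighted sum: the geometric decay of the $w(C)$ values does the work that a count of virtual machines cannot.
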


\begin{proof}
Since $x^*_C$ is forced to be integral for all heavy
configurations, we conclude that if there exists a virtual machine
that follows configuration $C$ with type $t(C)=t$, then $w(C)\leq
\eps^{\gamma^3} \cdot L_{t(C)}\cdot s_{\mu(t(C))} \leq
2\eps^{\gamma^3} w(C(\mu(t)))$, where the last inequality holds
using the constraint $\sum_{C\in \C: s(C)=s_{\mu(t)}, t(C)=t,
s_{\mu(t)}\cdot L_t \leq w(C) \leq (1+\eps)^3 \cdot
s_{\mu(t)}\cdot L_t } x_C \geq 1$ and allocating configuration
$C(\mu(t))$ that causes $x^*$ to satisfy this inequality,  to machine
$\mu(t)$. For each configuration $C$, there is at most one virtual
machine that follows $C$, and since there are at most $\beta$
configurations with a common component of $w(C)$ and the given
type $t$ and speed $s$, we conclude using the fact that $\opt$ is
nice that the total size of jobs that we move from their virtual
machines to machine $\mu(t)$ is at most
$$2\beta \cdot \eps^{\gamma^3} w(C(\mu(t))) \cdot
\sum_{h=0}^{\infty}  \frac{1}{(1+\eps)^h} \cdot
\sum_{h'=0}^{\infty} \frac{1}{\eps^2}\frac{1}{(1+\eps)^{h'}} \leq
\beta \cdot \eps^{\gamma^3-5} w(C(\mu(t))) \ .$$  The claim will
follow if we can select a value of $\gamma$ such that $\beta<
\eps^{6-\gamma^3}$.

Recall that $\beta = \tau\cdot \left( \frac{2}{\eps} \right)^{(2\gamma+1)^2 \log_{1+\eps} (1/\eps) }\leq \tau\cdot \left( \frac{2}{\eps} \right)^{5\gamma^2 /\eps^2} $.  Thus, in order to ensure that $\beta< \eps^{6-\gamma^3}$, it suffices to select $\gamma$ such that $\tau < \left( \frac{\eps}{2} \right)^{(5\gamma^2 /\eps^2) + 6-\gamma^3}$.  Observe that for every $\eps>0$ the function $H(\gamma)= (5\gamma^2 /\eps^2) + 6- \gamma^3$ decreases without bounds when $\gamma$ increases to $\infty$.  Thus, we can select a value of $\gamma$ (as a function of $\tau$ and $\eps$) for which the last inequality holds, e.g. selecting $\gamma = \tau \cdot \frac{20}{\eps^2}$ is sufficient.
\qed
\end{proof}

Next, we describe the assignment of actual machine to indices in $\{1,2,\ldots ,m\}$.  More precisely, the last step is to assign a type $\hat{t}_i$ for every machine $i$ satisfying the total activation cost bound, and to allocate for every $C\in \C$ and for every actual machine that follows configuration $C$, an index $i$ such that $\hat{t}_i=t(C)$.
This assignment of types will enforce our guessing of $\mu(t)$ for all $t$
This step is possible as we show next  using the integrality of the assignment
polytope.
\begin{lemma}\label{machine_assignment_types_feas_lem}
There is a polynomial time algorithm that finds a type $t_i$ for
every machine $i$, such that the total activation cost of all
machines is at most $\hat{A}$, for all $s,t$ the number of machines of
speed $s$ that are activated to type $t$ is at least the number of
actual machines that follow configurations with type $t$ and speed
$s$, and for all $t$  $\mu(t)$ is the minimum index of a machine that is assigned type $t$.
\end{lemma}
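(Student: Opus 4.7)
The plan is to produce the assignment $\hat{t}_i$ by rounding the fractional type-assignment $z^*$ from the MILP via an auxiliary transportation-style linear program whose constraint matrix is totally unimodular (TU). For every pair $(s,t)$ let $a_{s,t} = \sum_{C\in \C: s(C)=s, t(C)=t} \lfloor x^*_C \rfloor$ denote the number of actual machines of speed $s$ that follow configurations with type $t$. I would consider the LP in variables $z_{i,t}\geq 0$ that minimizes $\sum_{i,t}\alpha_i(t)\,z_{i,t}$ subject to (i) $\sum_{t=1}^{\tau} z_{i,t}=1$ for every machine $i$; (ii) $\sum_{i:\, s_i=s} z_{i,t}\geq a_{s,t}$ for every pair $(s,t)$; and (iii) the guessing-fixings $z_{\mu(t),t}=1$ and $z_{i,t}=0$ for every $t$ and every $i<\mu(t)$.

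The first step is to certify feasibility of this LP through the MILP optimum $z^*$. From the MILP constraints $\sum_{C:\, t(C)=t,\, s(C)=s} x_C \leq m(s,t)$ and $\sum_{i:\, s_i=s} z_{i,t}= m(s,t)$, together with $\lfloor x^*_C\rfloor \leq x^*_C$, the vector $z^*$ satisfies (ii); the MILP guessing constraints give (iii); and $\sum_t z^*_{i,t}=1$ gives (i). The MILP activation-budget constraint yields $\sum_{i,t}\alpha_i(t)\,z^*_{i,t}\leq \hat{A}$, which bounds the LP objective value of $z^*$ by $\hat{A}$.

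The second step is to observe that after the fixing substitutions from (iii) are applied, the remaining constraint matrix is the (unsigned) bipartite node-edge incidence matrix of the graph whose left vertices are the machines, whose right vertices are the pairs $(s,t)$, and whose edges are the variables $z_{i,t}$ (with $z_{i,t}$ incident to machine $i$ and to the pair $(s_i,t)$). Since this matrix is TU and all right-hand sides are integer, the LP has an integer optimum and can be solved in polynomial time (e.g., by reduction to minimum-cost flow). Picking any integer optimum $\hat{z}$ and setting $\hat{t}_i$ to be the unique $t$ with $\hat{z}_{i,t}=1$ yields the desired assignment: constraint (ii) ensures that at least $a_{s,t}$ machines of speed $s$ are assigned type $t$; constraint (iii) ensures that $\mu(t)$ is the minimum-index machine of type $t$; and the LP objective value, bounded above by the value attained by $z^*$, does not exceed $\hat{A}$.

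The main obstacle I anticipate is preserving total unimodularity. A more naive encoding would place the budget as a constraint $\sum_{i,t}\alpha_i(t)\,z_{i,t}\leq \hat{A}$, but this extra row in general destroys the TU structure because the coefficients $\alpha_i(t)$ can be arbitrary non-negative rationals. My workaround is to promote the activation cost from a hard constraint to the objective and to rely on the fact that $z^*$ certifies an LP-feasible point of cost at most $\hat{A}$, so every LP optimum, in particular the integer vertex we extract, automatically respects the budget. With TU maintained in this way, the rest of the argument is mechanical and polynomial-time.
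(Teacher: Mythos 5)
Your proposal is correct and follows essentially the same route as the paper: the paper's $(LP-z)$ is exactly your LP (activation cost moved into the objective, coverage constraints with right-hand sides $\lfloor x^*_C\rfloor$, and the guessing fixings), with feasibility and the budget bound certified by $z^*$ and integrality obtained from total unimodularity of the resulting bipartite incidence matrix. Your explicit justification of why the matrix is TU and why the budget must live in the objective is a useful elaboration of what the paper leaves implicit, but it is not a different argument.
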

\begin{proof}
We consider the following linear program (denoted as $(LP-z)$):
\begin{eqnarray*}
\min & \sum_{t=1}^{\tau} \sum_{i=1}^m \alpha_i(t)z_{i,t} & \\
s.t. & \sum_{t=1}^{\tau} z_{i,t}=1 & \forall i, \\
 & \sum_{i: s_i=s} z_{i,t} \geq \sum_{C\in \C : t(C)=t, s(C)=s} x'_C& \forall s,\ \forall t\\
 & z_{i,t} =0 & \forall t, \forall i<\mu(t) \\
 & z_{\mu(t),t}=1 & \forall t \\
 & z_{i,t} \geq 0 & \forall i \forall t \ \ .
\end{eqnarray*}
The constraint matrix of $(LP-z)$ is totally unimodular, and thus
by solving the linear program and finding an optimal basic
solution, we find an optimal integer solution in polynomial time.
Since the fractional solution $z^*$ is a feasible solution with
objective function value that does not exceed $\hat{A}$, we
conclude that the optimal integer solution that we find does not
violate the upper bound on the total activation cost.  This
integer solution defines a type $t_i$ for every machine $i$ by
letting $t_i$ be the value for which $z_{i,t_i}=1$.  Then using
the constraints of the linear program for every speed $s$ and
every type $t$, the number of machines of speed $s$ for which we
define type $t$ is at least the number of actual machines that
follow configurations with type $t$ and speed $s$ and furthermore for every type $t$ 
$\mu(t)$ is the minimum index of a machine that is assigned type $t$, as required. \qed
\end{proof}

Thus, we conclude:
\begin{theorem}\label{main-thm}
Problem $P$ admits an EPTAS.
\end{theorem}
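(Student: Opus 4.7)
The plan is to stitch together the ingredients established earlier: the geometric rounding of Section~\ref{sec:round}, the nice-solution characterization of Section~\ref{sec:nice}, the enumeration of guesses from Section~\ref{sec:guess}, the optimal solution of the MILP from Section~\ref{sec:MILP}, and the rounding of Section~\ref{sec:round-milp}. The algorithm first rounds the input according to Lemma~\ref{rounding_step_lem}; then, for each of the $O(nm\log_{1+\eps} n \cdot (m\tau/\eps)^{\tau})$ guesses enumerated in Lemma~\ref{guessing_lem}, it formulates the MILP of Section~\ref{sec:MILP}, solves it to optimality, converts the resulting solution to a feasible schedule using Lemmas~\ref{job-assignment-lemma}, \ref{virtual_machine_alloc_lem} and~\ref{machine_assignment_types_feas_lem}, and returns the cheapest schedule produced over all guesses.

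For the running time I would rely on the algorithms of \cite{Len83,Kan83}: the number of integer variables in the MILP is $O(\beta)+O(\tau\gamma\log_{1+\eps}(1/\eps))$ which, with $\gamma=\tau\cdot 20/\eps^2$ from Lemma~\ref{virtual_machine_alloc_lem} and the fact that $\tau$ is constant, is a function of $1/\eps$ alone. Hence each MILP is solvable in $f(1/\eps)\cdot\mathrm{poly}(n,m)$ time. The post-processing of Section~\ref{sec:round-milp} consists of two totally unimodular linear programs ($(LP\text{-}y)$ and $(LP\text{-}z)$) plus the greedy next-fit packing of small jobs, all polynomial. Multiplying by the polynomial number of guesses yields an overall running time of the form $f'(1/\eps)\cdot\mathrm{poly}(n,m)$, matching the EPTAS template.

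For the approximation guarantee I would compose multiplicative losses through repeated application of Lemma~\ref{claim_prop_F}. Lemma~\ref{rounding_step_lem} gives a loss of $(1+\eps)^{2\phi}$ when passing to the rounded instance, Lemma~\ref{nice_lem} contributes a further $(1+\eps)^{\phi}$ for replacing $\opt$ by its best nice counterpart, and Proposition~\ref{prop-milp} contributes another $(1+\eps)^{\phi}$ in bounding the MILP optimum by the nice-optimum cost. Lemmas~\ref{job-assignment-lemma} and~\ref{virtual_machine_alloc_lem} show that each machine in the final schedule has load at most $(1+\eps)^{8}\cdot w(C)/s(C)$ relative to its assigned configuration, so by Lemma~\ref{claim_prop_F} the load-dependent part of the objective blows up by at most a further factor of $(1+\eps)^{8\phi}$; the $\psi$-term is already within $(1+\eps)$ of the true makespan by the definition of the guess $O$, and the $(LP\text{-}y)$ step preserves the total rejection penalty of the MILP solution. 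Composing all factors gives cost at most $(1+\eps)^{c\phi}\opt$ for an absolute constant $c$; rescaling $\eps$ to $\eps/(c\phi)$ (legal since $\phi$ is constant) yields the desired $(1+O(\eps))$-approximation.

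The main technical obstacle lies in the last bookkeeping step: one must verify that the per-machine inequality ``load $\leq (1+\eps)^{8}\cdot w(C)/s(C)$'' supplied by Section~\ref{sec:round-milp} plugs cleanly into Lemma~\ref{claim_prop_F} to control simultaneously the $\sum_i(\Lambda_i)^{\phi}$ part and the $\max_i\Lambda_i$ part of $F$, and that feasibility (activation budget, rejection-cost accounting, consistency of the enforced $\mu(t)$ with the integer type assignment returned by Lemma~\ref{machine_assignment_types_feas_lem}, and the special handling of jobs transferred from virtual machines to $\mu(t)$) is preserved throughout the pipeline so that the chain of estimates refers to a genuinely feasible schedule of problem $P$.
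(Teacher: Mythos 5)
Your proposal is correct and follows essentially the same route as the paper: the paper's own proof of Theorem~\ref{main-thm} is exactly this stitching together of Lemmas~\ref{rounding_step_lem}, \ref{nice_lem}, \ref{guessing_lem}, Proposition~\ref{prop-milp} and Lemmas~\ref{job-assignment-lemma}--\ref{machine_assignment_types_feas_lem}, with the running time controlled by the constant number of integer variables in the MILP and the polynomial number of guesses. The only small point the paper makes explicit that you leave implicit is that in the final type-assignment step a machine receiving no actual machine (hence no jobs) has load zero regardless of its assigned type, so Lemma~\ref{machine_assignment_types_feas_lem} cannot increase the cost.
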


\begin{proof}
The time complexity of the scheme is
$O(f(\frac{1}{\eps},\gamma,\tau) \cdot m^{O(\tau)} \cdot
poly(n))$, and as proved in Lemma \ref{virtual_machine_alloc_lem},
$\gamma$ is a function of $\eps$ and $\tau$.  Thus, in order to
show that the algorithm is an EPTAS, it suffices to prove its
approximation ratio, and that the resulting solution is feasible.
Based on the sequence of lemmas, the approximation ratio is
proved, using the fact that the load of an empty set of jobs is
zero no matter what is the type of the machine and thus for every
$i$, if machine $i$ is assigned an empty set of jobs the objective
function value does not depend on the type assigned to $i$.  Thus,
in the last step the cost of the solution does not increase.

Thus, we need to show that the resulting solution is feasible.  Note that every job assignment is feasible, the feasibility of the assignment of types to machines is feasible using Lemma \ref{machine_assignment_types_feas_lem}.
  Thus, our solution is a feasible solution to problem $P$.
\qed
\end{proof}

\bibliographystyle{abbrv}

\end{document}